\documentclass[11pt]{article}
\usepackage[utf8]{inputenc}
\usepackage{hyperref}
\usepackage{amsmath}
\usepackage{amsthm}
\usepackage{amssymb}
\usepackage{authblk}
\usepackage{cite}
\usepackage{lipsum}
\usepackage{geometry}
\usepackage{enumerate}
\usepackage{array}
\usepackage{multicol}
\usepackage{multirow}
\usepackage{xcolor}
\usepackage{kotex}
\usepackage{makecell}
\textwidth=6.6 in
\textheight=8.5 in
\oddsidemargin=0.10in
\evensidemargin=0.10in
\topmargin=0.0in
\headsep=18pt
\headheight=12pt
\newtheorem{theorem}{Theorem}[section]
\numberwithin{theorem}{section}
\newtheorem{lemma}[theorem]{Lemma}
\newtheorem{defi}[theorem]{Definition}
\newtheorem{corollary}[theorem]{Corollary}

\newtheorem{remark}[theorem]{Remark}

\newcommand{\Max}{\displaystyle \max}
\newcommand{\Sum}{\displaystyle \sum}
\newcommand{\Frac}{\displaystyle \frac}
\newcommand{\F}{\mathbb F}

\newcommand{\Fp}{\mathbb{F}_{p}}
\newcommand{\Fq}{\mathbb{F}_{q}}

\newcommand{\Fqmul}{\mathbb{F}_{q}^*}

\begin{document}

\title{Locally-APN Binomials with Low Boomerang Uniformity in Odd Characteristic}
\author{ Namhun Koo$^1$, Soonhak Kwon$^{2,3}$, Minwoo Ko$^2$, Byunguk Kim$^2$\\
	\small{\texttt{ Email: komaton@skku.edu, shkwon@skku.edu, minwoo1403@skku.edu, kbu0923@g.skku.edu}}\\
	\small{$^1$Institute of Basic Science, Sungkyunkwan University, Suwon, Korea}\\
	\small{$^2$Department of Mathematics, Sungkyunkwan University, Suwon, Korea}\\
	\small{$^3$Applied Algebra and Optimization Research Center, Sungkyunkwan University, Suwon, Korea}
}
\date{}

\maketitle

\begin{abstract}
	Recently, several studies have shown that when $q\equiv3\pmod{4}$, 
	for certain choices of $r$, the function $F_r(x)=x^r+x^{r+\frac{q-1}{2}}$ defined over $\Fq$ is locally-APN and has boomerang uniformity at most~$2$. In this paper, we extend these results by showing that if there is at most one $x\in \Fq$ with $\chi(x)=\chi(x+1)=1$ satisfying $(x+1)^r - x^r = b$ for all $b\in \Fqmul$ and $\gcd(r,q-1)\mid 2$, then $F_r$ is locally-APN with boomerang uniformity at most $2$. Moreover, we study the differential spectra of $F_3$ and $F_{\frac{2q-1}{3}}$, and the boomerang spectrum of $F_2$ when $p=3$.
	
	\bigskip
	\noindent \textbf{Keywords.} Binomials, Differential Uniformity, Differential Spectrum, Boomerang Uniformity, Boomerang Spectrum, Locally-APN Functions.
	
	\bigskip
	\noindent \textbf{Mathematics Subject Classification(2020)} : 94A60, 06E30
\end{abstract}

\section{Introduction}

In this paper, let $q=p^n$ be an odd prime power and $\Fq$ be the finite field of $q$ elements. 
In recent years, considerable attention has been devoted to the study of the differential properties of functions of the form
\begin{equation}
	F_{r,u}(x)=x^r(1+u\chi(x)),
\end{equation}
where $\chi(x)=x^{\frac{q-1}{2}}$ is the quadratic character over $\Fq$, and $u \in \Fqmul$. Differential uniformity, introduced by Nyberg\cite{Nyb94}, quantifies the resistance of an S-box in a block cipher to differential cryptanalysis. The definition of the differential uniformity is given below.

\begin{defi}
	Let $F$ be a function on $\Fq$. Define $\delta_F(a,b)$ by the number of solutions of $F(x+a)-F(x)=b$, where $a\in \Fqmul$ and $b\in \Fq$. Then, the \textbf{differential uniformity} of $F$ is defined by :
	\begin{equation*}
		\delta_F = \Max_{a\in \Fqmul, b\in \Fq}\delta_F(a,b).
	\end{equation*} 
\end{defi}
A function on $\Fq$ is said to be \emph{perfect nonlinear (PN)}, if its differential uniformity is equal to $1$. When its differential uniformity equals~$2$, the function is called \emph{almost perfect nonlinear (APN)}.
In~\cite{NH07}, Ness and Helleseth showed that $F_{r,u}$ is APN when $r=q-2$ and $\chi(u-1)=\chi(u+1)=\chi(u)$ for $p=3$. 
Subsequently, Zeng \emph{et al.} \cite{ZHYJ07} extended this result to the case $q\equiv 3\pmod{4}$, 
proving that $F_{q-2,u}$ is APN if $\chi(u-1)=\chi(u+1)=-\chi(5u\pm3)$. 

More recently, several studies have focused on the non-APN case of $F_{q-2,u}$. 
For instance, Xia \emph{et al.}~\cite{XBC+24} investigated the differential uniformity of $F_{q-2,u}$ 
and determined its differential spectrum when $\chi(u+1)=\chi(u-1)$ in characteristic~$3$. 
Their results were later extended in~\cite{XLB+25} to the general case where $q\equiv 3 \pmod{4}$. 
Ren \emph{et al.}~\cite{RXY26} studied the differential spectrum of $F_{q-2,u}$ in the complementary case $\chi(u+1)\ne \chi(u-1)$, 
while Lyu \emph{et al.}~\cite{LWZ24} independently showed that $F_{q-2,u}$ is a differentially $4$-uniform permutation under the same condition.

Budaghyan and Pal~\cite{BP25} conjectured, based on experimental evidence, 
that there might exist an infinite subclass of APN functions among $F_{2,u}$. 
However, this conjecture was disproved in~\cite{MW25}, 
where the authors conducted a detailed study of the differential uniformity of $F_{2,u}$. 
The same conjecture was independently disproved in~\cite{BS25} 
by means of function field theory. 
Koo and Kwon~\cite{KK26} further investigated the differential uniformity of $F_{r,u}$ 
for $r=\frac{q+1}{4}$. The differential properties of $F_{r,u}$ are summarized in Table~\ref{table_DU_Fru}. Here, the fifth column (“Spectrum”) indicates whether the corresponding spectrum is explicitly determined (“O”) or not (“X”).

\renewcommand{\arraystretch}{1.3}
\begin{table}[h!]
	\centering
	\begin{tabular}{cccccc}
		\hline
		$r$ & $u$ & Conditions & $\delta_{F_{r,u}}$
		& Spectrum & Ref. \\
		\hline
		$q-2$ & $\pm1$ &$p=3$, $n>3$ : odd & $\frac{q+1}{4}$ (locally-PN) & O & \cite{LWZ24} \\
		$q-2$ & $\pm1$ &$q\equiv 3\pmod{4}$,\ $p\ne3$ & $\frac{q+1}{4}$ (locally-APN) & O & \cite{LWZ24} \\
		$q-2$ & $u\neq\pm1$  &\makecell{
			$q\equiv 3 \pmod{4}$, $q\ne 3$ \\
			$\chi(1+u)=\chi(u-1)=-\chi(5u\pm 3)$
		} & 2 (APN) & O & \cite{ZHYJ07,XLB+25} \\
		$q-2$ & $u\neq\pm1$ &\makecell{
			$q\equiv 3 \pmod{4}$ \\
			$\chi(1+u)=\chi(u-1) \ne-\chi(5u\pm 3)$
		} & $ 3 $ & O & \cite{XLB+25} \\
		$q-2$ & $u\neq\pm1$ &\makecell{
			$q\equiv 3 \pmod{4}$  \\
			$\chi(1+u)\ne\chi(u-1) $
		} & $\delta_{F_{r,u}} \leq4$ & O & \cite{XBC+24,RXY26} \\
		$2$ & $\pm1$ & $q \equiv3 \pmod{4}$,  & $\frac{q+1}{4}$ (locally-APN) & X & \cite{MW25}\\
		$2$ & $u\neq\pm1$ & $q \equiv3 \pmod{4}$ & $3\leq\delta_{F_{r,u}} \leq5$ & X & \cite{MW25}\\
		$\frac{q+1}{4}$ & $\pm1$ & $q \equiv3 \pmod{8}$ & $\frac{q+1}{4}$ (locally-PN) & O & \cite{KK26}\\
		$\frac{q+1}{4}$ & $\pm1$ & $q \equiv7 \pmod{8}$ & $\frac{q+1}{4}$ (locally-APN) & O & \cite{KK26}\\
		$\frac{q+1}{4}$ & $u\neq\pm1$ & $q \equiv3 \pmod{4}$ & $\delta_{F_{r,u}} \leq5$ & X & \cite{KK26}\\
		$p^k+1$ & $\pm1$ & $n$ : odd & $\frac{q+1}{4}$ (locally-APN) & X & This paper \\
		$\frac{3^k+1}{2}$ & $\pm1$ & $p=3$, $n$ : odd, $\gcd(k,n)=1$ & $\frac{q+1}{4}$ (locally-APN) & X & This paper \\
		$3$ & $\pm1$ & $q \equiv 11 \pmod{12}$ & $\frac{q+1}{4}$ (locally-APN) & O & This paper \\
		$\frac{2q-1}{3}$ & $\pm1$ & $q \equiv 11 \pmod{12}$ & $\frac{q+1}{4}$ (locally-APN) & O & This paper \\
		$\frac{3^{\frac{n+1}{2}} - 1}{2}$& $\pm1$ & $p=3,\; n$ : odd  & $\frac{q+1}{4}$ (locally-APN) & X & This paper \\
		$\frac{3^{n+1}-1}{8}$ & $\pm1$ & $p=3,\; n$ : odd & $\frac{q+1}{4}$ (locally-APN) & X & This paper \\
		\hline
	\end{tabular}
	\caption{Differential uniformity of $F_{r,u}$}\label{table_DU_Fru}
\end{table}

The boomerang uniformity was introduced as the maximum value of the \emph{boomerang connectivity table} (BCT), 
a tool proposed to analyze the \emph{boomerang attack}, 
which is a variant of differential cryptanalysis~\cite{CHP+18}. 
Originally, this notion was defined only for permutations over finite fields of characteristic~$2$. 
Later, Li \emph{et al.}~\cite{LQSL19} extended it to general functions, not necessarily permutations, 
so that it can be applied to mappings in arbitrary characteristic.

\begin{defi}
	Let $F$ be a function on $\Fq$. Define $\beta_F(a,b)$ by the number of common solutions $(x,y)$ of the following system :
	\begin{equation*}
		\begin{cases}
			F(x)-F(y)=b,\\
			F(x+a)-F(y+a)=b.
		\end{cases}
	\end{equation*}
	Then the \textbf{boomerang uniformity} of $F$ is given by :
	\begin{equation*}
		\beta_F = \Max_{a,b\in \Fqmul}\beta_F(a,b).
	\end{equation*} 
\end{defi}

For the case $u\ne\pm1$, the existing studies have focused only on the differential properties of $F_{r,u}$, 
and no results on their boomerang behavior have been reported so far. 
In contrast, when $u=\pm1$, both the differential and boomerang properties of $F_{r,u}$ have been investigated in several recent works.
In the following, we concentrate on this particular case. 
Note that $F_{r,-1}$ is affine equivalent to $F_{r,1}$. 
In this case, we simply write
\[
F_r(x)=x^r(1+\chi(x))=x^r+x^{r+\frac{q-1}{2}}.
\]
Recently, several studies have investigated the differential and boomerang spectra of $F_r$
for various choices of $r$. These works consistently show that $F_r$ is locally-APN 
and has boomerang uniformity at most~$2$ in all known cases. 
Lyu \emph{et al.}~\cite{LWZ24} determined the differential and boomerang spectra of $F_{q-2}$, 
showing that it is locally-PN with boomerang uniformity~$0$ when $p=3$, 
and locally-APN with boomerang uniformity at most~$1$ when $p\ne3$. 
Mesnager and Wu~\cite{MW25} proved that $F_2$ is locally-APN with boomerang uniformity at most~$2$, 
and further investigated its differential spectrum. 
They also proved that the boomerang uniformity of $F_2$ equals~$2$, when $q\ge9613^2$. 
However, we will show in Section~\ref{subsec_BS_F2} that the boomerang uniformity of $F_2$ equals~$1$ when $p=3$. 
Ren and Yan~\cite{RY26} independently analyzed the differential spectrum of $F_2$ using a different approach. 
Koo and Kwon~\cite{KK26} studied $F_r$ for $r=\frac{q+1}{4}$, 
showing that $F_r$ is locally-PN with boomerang uniformity~0 when $q\equiv3\pmod{8}$, 
and locally-APN with boomerang uniformity~$2$ when $q\equiv7\pmod{8}$. 
They also provided complete characterizations of the differential and boomerang spectra of $F_r$ in these cases. The boomerang properties of $F_{r}$ are summarized in Table~\ref{table_BU_Fr}. Here, “O” indicates that the corresponding spectrum is explicitly determined, whereas “X” indicates that it is not.

\begin{table}[h!]
	\centering
	\begin{tabular}{ccccc}
		\hline
		$r$ &  Conditions & $\beta_{F_r}$ & Spectrum & Ref. \\
		\hline
		$q-2$ & $p=3$, $n$ : odd & $0$ & - & \cite{LWZ24} \\
		$q-2$ & $q \equiv3 \pmod{4}$, $q\ge 7$ & $1$  & O & \cite{LWZ24} \\
		$2$ & $q \equiv3 \pmod{4}$, large $q$ & 2 & X & \cite{MW25} \\
		$\frac{q+1}{4}$ & $q\equiv3\pmod8$ & 0  & - & \cite{KK26} \\
		$\frac{q+1}{4}$ & $q\equiv7\pmod8$ & 2  & O & \cite{KK26}\\
		$p^k+1$ & $n$ : odd & $\le2$ & X & This paper \\
		$\frac{3^k + 1}{2}$ & $p=3,\; n$ : odd, $\gcd(k,n)=1$ & $\le2$& X & This paper \\
		$3$ & $q \equiv 11 \; (\bmod 12)$ &$\le2$ & X & This paper \\
		$\frac{2q-1}{3}$ & $q \equiv 11 \; (\bmod 12)$ &$\le2$ & X & This paper \\
		$\frac{3^{n+1}-1}{2}$ & $p=3,\; n$ : odd &$\le2$& X & This paper \\
		$\frac{3^{n+1}-1}{8}$ & $p=3,\; n$ : odd &$\le2$ & X & This paper \\
		$2$ & $p=3, \; n\geq3$ : odd & 1 & O & This paper\\
		\hline
	\end{tabular}
	\caption{Boomerang uniformity of $F_{r}$.}\label{table_BU_Fr}
\end{table}

In this paper, we study the differential and boomerang properties of 
$F_r$ when $q\equiv 3 \pmod{4}$. 
In particular, we show that if the equation $(x+1)^r - x^r = b$ has at most one solution satisfying 
$\chi(x)=\chi(x+1)=1$ and $\gcd(r,q-1)\mid 2$, 
then $\delta_{F_r}(1,b)\le 2$ for all $b\in\Fqmul$. 
This implies that $F_r$ is locally-APN. 
Furthermore, we also prove that its boomerang uniformity is at most~$2$.
The above condition is more likely to hold when the differential uniformity of $x^r$ is small, and it certainly holds when $x^r$ is PN. 
In this paper, we show that this is indeed the case for the exponents $r$ listed in Table~\ref{table_r} in Section \ref{sec_pre}. 
In addition, we further investigate the differential spectrum for $r=3$ and $r=\frac{2q-1}{3}$, 
and the boomerang spectrum of $F_2$ when $p=3$. 
By Lemma~\ref{CM linear equiv}, the latter result can be extended to the case $k=n-1$ 
for $r=\frac{3^k+1}{2}$.
For convenience, we recall that when $q\equiv 3 \pmod{4}$, one has $\chi(-1)=-1$, and that $x^{\frac{q+1}{4}}$ provides a square root of $x\in\Fq$ whenever $\chi(x)=1$. These facts will be used throughout the paper without further mention.

The rest of this paper is organized as follows. 
Section~\ref{sec_pre} provides the necessary preliminaries for our results. 
In Section~\ref{sec_Diff}, we investigate the differential properties of $F_r$. 
More precisely, for our exponents $r$ in Table \ref{table_r} we prove that $F_r$ is locally-APN in Section~\ref{subsec_DU_Fr}, 
and study the differential spectra of $F_3$ and $F_{\frac{2q-1}{3}}$ 
in Sections~\ref{subsec_DS_F3} and~\ref{subsec_DS_F3inv}, respectively. 
Section~\ref{sec_boomerang} focuses on the boomerang properties of $F_r$: 
we show that for our exponents $r$ in Table \ref{table_r} $F_r$ has boomerang uniformity at most~$2$ in Section~\ref{subsec_BU_Fr}, 
and determine the boomerang spectrum of $F_2$ for $p=3$ in Section~\ref{subsec_BS_F2}. 
Finally, Section~\ref{sec_con} concludes the paper.

\section{Preliminaries}\label{sec_pre}

When $F(x)=x^r$ is a power function, multiplying $\frac{1}{a^r}$ on both sides of $b=F(x+a)-F(x)=(x+a)^r-x^r$, we have
\begin{equation*}
	\Frac{b}{a^r}=\left(\Frac x a +1\right)^r -\left(\Frac x a\right)^r = (y+1)^r-y^r,
\end{equation*}
where $y=\Frac x a$. Thus, we have 
\begin{equation}\label{dupower_eqn}
	\delta_F(a,b)=\delta_F\left(1, \Frac{b}{a^r} \right), \text{ so }	\delta_F=\Max_{b\in \Fq}\delta_F(1,b),
\end{equation}
in this case. The notion of \emph{locally-APN} functions was originally introduced in \cite{BCC11} for power functions in the even characteristic case. In \cite{HLX+23}, Hu \emph{et al.} generalized it for odd characteristic cases, as follows. 
\begin{equation}\label{locallyapn_def}
	\delta_F(1,b)\le 2\text{ for all }b\in \Fq\setminus\Fp.
\end{equation}
In our previous paper \cite{KK26}, we extend the above definition to functions with the property
\begin{equation}\label{locallyapn_cordef_eqn}
	\delta_F(a,b)= \delta_F(1,g_a(b))\text{ for all }b\in \Fq,
\end{equation}
where $g_a$ permutes $\Fq$ for each $a\in\Fqmul$. 

\begin{defi}\label{locallyapn_cordef}
	Let $F$ be a function on $\Fq$ satisfying \eqref{locallyapn_cordef_eqn} for every $a\in \Fqmul$. Then, $F$ is called \textbf{locally-APN} if $\delta_F(1,b)\le 2$ for all $b\in \Fq\setminus \Fp$.
\end{defi}

\begin{lemma}\label{Fruproperty2_lemma}(Lemma 11 of \cite{MW25})
	Let $a\in \Fqmul$ and $b\in \Fq$. Then,
	\begin{align*}
		\delta_{F_{r,u}}(a,b)&=
		\begin{cases}
			\delta_{F_{r,u}}\left( 1, \frac{b}{a^r}\right) &\text{ if }\chi(a)=1,\\
			\delta_{F_{r,u}}\left( 1, \frac{b}{(-1)^{r+1}a^r}\right) &\text{ if }\chi(a)=-1,
		\end{cases}
		\\
		\beta_{F_{r,u}}(a,b)&=
		\begin{cases}
			\beta_{F_{r,u}}\left( 1, \frac{b}{a^r}\right) &\text{ if }\chi(a)=1,\\
			\beta_{F_{r,u}}\left( 1, \frac{b}{(-1)^{r}a^r}\right) &\text{ if }\chi(a)=-1.
		\end{cases}
	\end{align*}
\end{lemma}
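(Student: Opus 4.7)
The plan is to reduce $\delta_{F_{r,u}}(a,b)$ and $\beta_{F_{r,u}}(a,b)$ to their values at $a = 1$ by a linear change of variables on the inner variables, with the particular substitution chosen according to $\chi(a)$. The essential algebraic identity is that $F_{r,u}(cz) = c^r F_{r,u}(z)$ whenever $\chi(c) = 1$, because then $\chi(cz) = \chi(c)\chi(z) = \chi(z)$; when $\chi(c) = -1$ the factor $1 + u\chi(cz)$ becomes $1 - u\chi(z)$ and this scaling identity fails. Since the paper works in the setting $q \equiv 3 \pmod 4$ we have $\chi(-1) = -1$, so for every $a \in \Fqmul$ exactly one of $\chi(a), \chi(-a)$ equals $1$, which is what makes a two-case split suffice.

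If $\chi(a) = 1$, substitute $x = ay$ in $F_{r,u}(x+a) - F_{r,u}(x) = b$. The scaling identity turns the left side into $a^r\bigl(F_{r,u}(y+1) - F_{r,u}(y)\bigr)$, giving a bijection between solution sets and hence $\delta_{F_{r,u}}(a,b) = \delta_{F_{r,u}}(1, b/a^r)$. The same change of variables $(x,y) \mapsto (as, at)$ applied to the boomerang system factors $a^r$ out of both equations simultaneously, yielding $\beta_{F_{r,u}}(a,b) = \beta_{F_{r,u}}(1, b/a^r)$.

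If $\chi(a) = -1$ then $\chi(-a) = 1$, so we use the substitution $x = -ay$ for the differential count, and $(x,y) \mapsto (-as, -at)$ for the boomerang count. For the differential, $x + a = -a(y-1)$, and applying the scaling identity with $c = -a$ the equation becomes $(-a)^r\bigl(F_{r,u}(y-1) - F_{r,u}(y)\bigr) = b$; reindexing $z = y - 1$ rewrites the inner difference as $-\bigl(F_{r,u}(z+1) - F_{r,u}(z)\bigr)$, producing the extra factor $-1$ that turns $(-1)^r$ into $(-1)^{r+1}$ in the denominator. For the boomerang, the substitution converts the two equations into $(-a)^r\bigl(F_{r,u}(s) - F_{r,u}(t)\bigr) = b$ and $(-a)^r\bigl(F_{r,u}(s-1) - F_{r,u}(t-1)\bigr) = b$; since neither equation requires reversing a one-sided difference, no additional sign arises and one obtains $\beta_{F_{r,u}}(a,b) = \beta_{F_{r,u}}(1, b/((-1)^r a^r))$. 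The only delicate step is this sign-tracking in the differential half of the $\chi(a) = -1$ case, which is precisely where the discrepancy between the $(-1)^{r+1}$ in the differential formula and the $(-1)^r$ in the boomerang formula originates.
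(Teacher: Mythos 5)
Your argument is correct: the scaling identity $F_{r,u}(cz)=c^rF_{r,u}(z)$ for $\chi(c)=1$, combined with $\chi(-1)=-1$ (so that one of $a,-a$ is always a square), and the careful reindexing $z=y-1$ that flips the one-sided difference, exactly accounts for the $(-1)^{r+1}$ versus $(-1)^{r}$ discrepancy between the two formulas. The paper itself gives no proof of this lemma --- it is quoted as Lemma 11 of \cite{MW25} --- and your substitution argument is the standard one used there, so there is nothing to flag.
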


We denote 
\begin{equation*}
	S_{ij}=\{x\in \Fq : \chi(x)=(-1)^i, \chi(x+1)=(-1)^j\},
\end{equation*}
where $i,j \in \{0,1\}$. The following lemma is well-known and will be used in Section \ref{subsec_DS_F3}.

\begin{lemma}\label{Sset num of elts}~\cite{Dic35} If $q\equiv 1\pmod{4}$, then $\#S_{00}=\frac{q-5}{4}$ and $\#S_{01}=\#S_{10}=\#S_{11}=\frac{q-1}{4}$. If $q\equiv 3\pmod{4}$, then $\#S_{00}=\#S_{10}=\#S_{11}=\frac{q-3}{4}$ and $\#S_{01}=\frac{q+1}{4}$.
\end{lemma}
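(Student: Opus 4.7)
The plan is to use the standard character-sum indicator trick to reduce the counting of $S_{ij}$ to evaluating a short list of character sums over $\Fq$. For $x \in \Fq$ with $x \ne 0, -1$, the expression
$$\frac{1}{4}\bigl(1 + (-1)^{i}\chi(x)\bigr)\bigl(1 + (-1)^{j}\chi(x+1)\bigr)$$
equals $1$ precisely when $x \in S_{ij}$ and $0$ otherwise, since $\chi$ takes values in $\{-1,1\}$ on $\Fqmul$. So I would start by writing
$$4\,\#S_{ij} \;=\; \Sum_{x\ne 0,-1}\Bigl[\,1 + (-1)^{i}\chi(x) + (-1)^{j}\chi(x+1) + (-1)^{i+j}\chi\bigl(x(x+1)\bigr)\Bigr].$$

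Next I would evaluate the four resulting sums. The constant sum is $q-2$. Using $\sum_{x\in\Fqmul}\chi(x) = 0$, the second term gives $-\chi(-1)$ after removing the missing point $x=-1$, and the third term gives $-1$ after a shift of variable and removing $x=0$. For the Jacobsthal-type sum $\sum_{x\ne 0,-1}\chi(x(x+1))$, I would substitute $y = 1 + x^{-1}$, which is a bijection from $\Fq\setminus\{0,-1\}$ onto $\Fq\setminus\{0,1\}$ and satisfies $\chi(x(x+1)) = \chi(x^2)\chi(1+x^{-1}) = \chi(y)$; the resulting sum equals $-\chi(1) = -1$. Assembling everything,
$$4\,\#S_{ij} \;=\; (q-2) - (-1)^{i}\chi(-1) - (-1)^{j} - (-1)^{i+j}.$$

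Finally I would plug in $\chi(-1) = 1$ when $q \equiv 1 \pmod 4$ and $\chi(-1) = -1$ when $q \equiv 3 \pmod 4$, and read off the four cases $(i,j)\in\{0,1\}^2$ to recover the stated formulas: in the first regime only $(i,j)=(0,0)$ picks up an extra deficit giving $\frac{q-5}{4}$, while in the second regime only $(i,j)=(0,1)$ picks up the excess giving $\frac{q+1}{4}$.

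There is no real obstacle here; the only step requiring a little care is the change of variables for the Jacobsthal sum and the correct bookkeeping of the two excluded points $0$ and $-1$ (which is where the $\chi(-1)$ term enters and distinguishes the two cases $q\equiv 1,3\pmod 4$). The rest is a mechanical verification of the four sign patterns.
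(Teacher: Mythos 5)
Your proof is correct. Note that the paper gives no argument at all for this lemma --- it is stated with only a citation to Dickson's 1935 cyclotomy paper --- so your character-sum derivation supplies a proof where the authors rely on a reference. The argument itself is the standard one: the indicator $\frac14\bigl(1+(-1)^i\chi(x)\bigr)\bigl(1+(-1)^j\chi(x+1)\bigr)$ is valid on $\Fq\setminus\{0,-1\}$ (and only there, since $\chi(0)=0$ makes the product equal to $\tfrac12$ at $x=0$ when $j=0$, so excluding both points is essential, which you do), the three nontrivial sums evaluate to $-\chi(-1)$, $-1$, and $-1$ as you claim (the substitution $y=1+x^{-1}$ for the Jacobsthal-type sum is clean and the bijection onto $\Fq\setminus\{0,1\}$ is right), and the resulting formula $4\#S_{ij}=(q-2)-(-1)^i\chi(-1)-(-1)^j-(-1)^{i+j}$ reproduces all eight counts in the statement upon substituting $\chi(-1)=\pm1$. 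I verified each of the four sign patterns in both congruence classes; they all match.
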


The following four lemmas on the quadratic character are useful for deriving the differential spectrum of $F_3$ in Section \ref{subsec_DS_F3} and the boomerang spectrum of $F_2$ (for $p=3$) in Section \ref{subsec_BS_F2}.

\begin{lemma}\label{chi_lemma quad}(Theorem 5.48 of \cite{LN97})
	Let $f(x) = a_2x^2 +a_1 x + a_0 \in \Fq [x]$ with $p$ odd and $a_2 \ne 0$. Put $d=a_1^2-4a_0a_2$. Then,
	\begin{equation*}
		\sum_{x\in \Fq} \chi(f(x)) = 
		\begin{cases}
			-\chi(a_2) &\text{ if }d\ne 0,\\
			(q-1)\chi(a_2) &\text{ if }d = 0.
		\end{cases}
	\end{equation*}
\end{lemma}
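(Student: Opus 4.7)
The plan is to reduce the sum to the classical one $T(d) := \sum_{y \in \Fpn} \chi(y^2 - d)$ by completing the square. Since $p$ is odd and $a_2 \ne 0$, one has the identity $4a_2 f(x) = (2a_2 x + a_1)^2 - d$. Multiplicativity of $\chi$, together with $\chi(4) = 1$ and $\chi(a_2)^2 = 1$ (as $a_2 \ne 0$), then gives $\chi(f(x)) = \chi(a_2)\chi((2a_2 x + a_1)^2 - d)$. Because $y = 2a_2 x + a_1$ is a bijection of $\Fpn$ onto itself, the desired sum equals $\chi(a_2) T(d)$, so the task reduces to evaluating $T(d)$.

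The case $d = 0$ is immediate: $T(0) = \sum_{y \ne 0} \chi(y^2) = p^n - 1$, since $\chi(y^2) = 1$ for every nonzero $y$ and $\chi(0) = 0$. Multiplying by $\chi(a_2)$ gives the second formula.

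For $d \ne 0$, I would first regroup by $z = y^2$, writing $T(d) = \sum_{z \in \Fpn} N(z)\chi(z - d)$, where $N(z)$ counts the square roots of $z$ in $\Fpn$; using $N(0) = 1$ and $N(z) = 1 + \chi(z)$ for $z \ne 0$, the constant part $\sum_{z \in \Fpn} \chi(z-d) = 0$ drops out, leaving $T(d) = \sum_{z \ne 0} \chi(z(z-d))$. The heart of the argument is then the manipulation $\chi(z(z-d)) = \chi(z^2)\chi(1 - d/z) = \chi(1 - d/z)$ valid for $z \ne 0$, followed by the bijection $w = d/z$ on $\Fpn^*$ (which is a bijection precisely because $d \ne 0$); this converts the sum into $\sum_{w \ne 0}\chi(1-w) = \sum_{w \in \Fpn}\chi(1-w) - \chi(1) = 0 - 1 = -1$. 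This final inversion-substitution is the only nonmechanical step and thus the main obstacle; once $T(d) = -1$ is in hand, $\chi(a_2) T(d) = -\chi(a_2)$ yields the first formula and completes the proof.
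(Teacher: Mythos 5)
Your proof is correct and complete: the completion of the square, the reduction to $T(d)=\sum_{y}\chi(y^2-d)$, and the evaluation $T(d)=-1$ for $d\ne 0$ via $N(z)=1+\chi(z)$ and the inversion $w=d/z$ are all valid. The paper gives no proof of its own (it simply cites Theorem 5.48 of Lidl--Niederreiter), and your argument is essentially the standard proof of that cited result.
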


\begin{lemma}\label{chi_lemma cubic}
	Let $q\equiv 3\pmod{4}$ and $f$ be a function on $\Fq$ with $f(-x)=-f(x)$ for all $x\in \Fq$. Then, $\Sum_{x\in \Fq}\chi(f(x))=0$.
\end{lemma}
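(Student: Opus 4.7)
The plan is to exploit the hypothesis $q \equiv 3 \pmod 4$, which is equivalent to $\chi(-1) = -1$, together with the odd symmetry $f(-x) = -f(x)$, via an involutive pairing argument. The idea is that the map $x \mapsto -x$ permutes $\Fq$, fixes only $x = 0$ (since $p$ is odd), and its action on the summand $\chi(f(x))$ flips the sign.

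More concretely, I would first handle the fixed point of the involution: setting $x = 0$ in $f(-x) = -f(x)$ yields $f(0) = -f(0)$, so $f(0) = 0$, whence $\chi(f(0)) = 0$ under the standard convention $\chi(0) = 0$. Next, for each nonzero $x$, I would compute
\begin{equation*}
\chi(f(-x)) \;=\; \chi(-f(x)) \;=\; \chi(-1)\,\chi(f(x)) \;=\; -\chi(f(x)),
\end{equation*}
which remains valid even when $f(x) = 0$ (both sides are $0$). Hence the elements of each orbit $\{x, -x\}$ with $x \neq 0$ contribute $0$ to the sum.

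Finally, I would partition $\Fq \setminus \{0\}$ into $(q-1)/2$ such pairs, add in the zero contribution from $x = 0$, and conclude $\sum_{x \in \Fq} \chi(f(x)) = 0$. There is no real obstacle here; the only points one must be slightly careful about are (i) checking that $x \neq -x$ for $x \neq 0$ (guaranteed since $p$ is odd), and (ii) ensuring the identity $\chi(f(-x)) = -\chi(f(x))$ holds regardless of whether $f(x)$ vanishes.
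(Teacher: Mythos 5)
Your proof is correct and is essentially the same as the paper's: the paper applies the substitution $y=-x$ to the whole sum to get $\sum_x \chi(f(x)) = -\sum_x \chi(f(x))$, which is just the global form of your orbit-by-orbit pairing under the involution $x\mapsto -x$, both resting on $\chi(-1)=-1$ for $q\equiv 3\pmod 4$. Your extra care about the fixed point $x=0$ and the case $f(x)=0$ is implicit in the paper's one-line argument but adds nothing substantively different.
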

\begin{proof}
	Applying $y=-x$, we have $$\Sum_{x\in \Fq}\chi(f(x)) = \Sum_{y\in \Fq}\chi(f(-y)) = \Sum_{y\in \Fq}\chi(-f(y)) = -\Sum_{y\in \Fq}\chi(f(y)),$$
	which completes the proof.
\end{proof}
By the above lemma, we get $\Sum_{x\in \Fq}\chi\left( x(x^2-1)\right) =0$, which will be used in Section \ref{subsec_BS_F2}.

\begin{lemma}\label{chi_lemma inequal}(Theorem 5.41 of \cite{LN97})
	Let $\eta(\cdot)$ be a multiplicative character of $\Fq$ of order $m>1$ and let $f\in\Fq[x]$ be a monic polynomial of positive degree that is not $m$-th power of a polynomial. Let $d$ be the number of distinct roots of $f$ in its splitting field over $\Fq$. Then for every $a\in\Fq$ we have
	\begin{equation*}
		\left| \sum_{x\in \Fq} \eta (af(x))\right| \le (d-1)\sqrt{q}.
	\end{equation*}
\end{lemma}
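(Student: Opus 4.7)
The plan is to translate the character sum into a point-counting problem on an algebraic curve and then invoke the Hasse--Weil bound, i.e.\ the Riemann hypothesis for curves over finite fields. Without loss of generality assume $a\in\Fqmul$, and consider the affine superelliptic curve $C_a : y^m = af(x)$ defined over $\Fq$. By the orthogonality of multiplicative characters, the number of affine $\Fq$-rational points of $C_a$ satisfies
\[
\#C_a(\Fq) \;=\; q \;+\; \sum_{j=1}^{m-1} S_j(a), \qquad S_j(a) := \sum_{x\in\Fq}\eta^j(af(x)),
\]
where one uses that for $t\in\Fqmul$ the number of $y\in\Fq$ with $y^m=t$ equals $\sum_{j=0}^{m-1}\eta^j(t)$, and the convention $\eta^j(0)=0$ for $j\ge 1$ absorbs the contribution of the $x$ with $f(x)=0$.

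The hypothesis that $f$ is not an $m$-th power in $\Fq[x]$ guarantees that $C_a$ is geometrically irreducible. A standard ramification analysis, using that $f$ has exactly $d$ distinct roots in its splitting field, bounds the genus of the smooth projective model $\widetilde{C_a}$ by $g\le (m-1)(d-1)/2$. The Hasse--Weil theorem then yields $|\#\widetilde{C_a}(\Fq)-(q+1)|\le 2g\sqrt{q}$, and after accounting for the bounded discrepancy coming from the finitely many points at infinity and above the zeros of $f$, one deduces a bound of the same order of magnitude on $\sum_{j=1}^{m-1} S_j(a)$.

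To isolate the single character sum $S_1(a)$ with the sharp constant $d-1$ rather than the aggregate constant $(m-1)(d-1)$, I would pass to the Kummer $L$-function $L(T)$ attached to the pair $(\eta,f)$. Weil's theorem on curves shows that $L(T)$ is a polynomial in $T$ of degree at most $d-1$ whose reciprocal roots all have absolute value $\sqrt{q}$: writing $L(T)=\prod_{i=1}^{D}(1-\alpha_i T)$ with $D\le d-1$ and $|\alpha_i|=\sqrt{q}$, the identity $S_1(a)=-\sum_i \alpha_i$ forces $|S_1(a)|\le(d-1)\sqrt{q}$. The main obstacle is precisely this last step: the polynomiality, the degree bound, and the Riemann-hypothesis bound on the reciprocal roots together form the substantive content of Weil's theorem on curves over finite fields, which is why this lemma is quoted as a black-box from \cite{LN97} rather than reproved here.
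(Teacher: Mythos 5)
The paper does not prove this lemma at all: it is imported verbatim as Theorem 5.41 of \cite{LN97}, a classical Weil-type bound, and is used only as a black box (in Theorems \ref{DS_theorem r=3} and \ref{BS_theorem r=2}). Your sketch is a faithful outline of the standard argument behind that cited theorem, and you correctly identify that the substantive content --- the polynomiality and degree bound for the Kummer $L$-function together with the Riemann hypothesis for function fields giving $|\alpha_i|=\sqrt{q}$ --- is exactly what cannot be reproved in a few lines and is the reason the lemma is quoted rather than proved. One small caveat on your first paragraph: under the stated hypothesis that $f$ is merely not an $m$-th power, the curve $y^m=af(x)$ need not be geometrically irreducible (e.g.\ $m=4$ and $f=g^2$ with $g$ not a square), so the point-counting reduction as written has a gap; the $L$-function route in your final paragraph, which is the one Lidl--Niederreiter actually follow, works under the stated hypothesis and yields the sharp constant $d-1$ for the single character sum. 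Since the paper offers no proof to compare against, your proposal should be read as a correct high-level reconstruction of the cited source rather than an alternative to anything in the paper.
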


The following lemma is well-known and useful for our result.

\begin{lemma}\label{gcd_lemma}(Lemma 9 of \cite{EFR+20})
	Let $p$, $k$, $n$ be positive integers with $k\le n$. If $\frac{n}{\gcd(k,n)}$ is odd, then $\gcd(p^k+1, p^n-1)=2$.
\end{lemma}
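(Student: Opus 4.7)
The plan is to reduce the statement to the classical identity $\gcd(p^a-1,\,p^b-1)=p^{\gcd(a,b)}-1$ by using the factorization $p^{2k}-1=(p^k-1)(p^k+1)$. Set $d=\gcd(k,n)$ and $g=\gcd(p^k+1,\,p^n-1)$. Since $p^k+1$ divides $p^{2k}-1$, it follows that $g$ divides $\gcd(p^{2k}-1,\,p^n-1)=p^{\gcd(2k,n)}-1$, so everything hinges on computing $\gcd(2k,n)$.

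For that computation, write $k=d\alpha$ and $n=d\beta$, where by definition $\gcd(\alpha,\beta)=1$ and by hypothesis $\beta=n/\gcd(k,n)$ is odd. Then
\begin{equation*}
\gcd(2k,n)=d\cdot\gcd(2\alpha,\beta)=d\cdot\gcd(\alpha,\beta)=d,
\end{equation*}
where the middle equality uses that $\beta$ is odd, hence coprime to $2$. This is the only place the odd-parity assumption is used, and it is precisely the step one must take care of; fortunately it is a one-line gcd manipulation, so I do not expect any real obstacle here.

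Having shown that $g\mid p^d-1$, I would finish by reducing modulo $p^d-1$. Since $d\mid k$, we have $p^k\equiv 1\pmod{p^d-1}$, and therefore
\begin{equation*}
p^k+1\equiv 2\pmod{p^d-1}.
\end{equation*}
Consequently $g$ divides $\gcd(p^k+1,\,p^d-1)$, which in turn divides $\gcd(2,\,p^d-1)$. Because $p$ is odd (as $q=p^n$ is an odd prime power in the setting of this paper) and $d\geq 1$, the quantity $p^d-1$ is even, forcing $\gcd(2,\,p^d-1)=2$. Thus $g\mid 2$. On the other hand, $p^k+1$ and $p^n-1$ are both even, so $2\mid g$, and combining the two inequalities yields $g=2$, as desired. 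The whole argument is short; the only conceptual ingredient is the trick of passing from $p^k+1$ to $p^{2k}-1$ so that the standard $\gcd$-identity becomes applicable.
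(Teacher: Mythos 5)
Your proof is correct. Note that the paper itself gives no argument for this lemma—it is quoted verbatim as Lemma 9 of the cited reference \cite{EFR+20}—so there is no in-paper proof to compare against; your reduction via $p^k+1\mid p^{2k}-1$ and the identity $\gcd(p^a-1,p^b-1)=p^{\gcd(a,b)}-1$ is the standard route and every step checks out, including the key computation $\gcd(2k,n)=d$ from the oddness of $n/\gcd(k,n)$. One small point worth making explicit: as literally stated the lemma omits the hypothesis that $p$ is odd, yet your final step (and the truth of the conclusion) requires it—for $p=2$ both $p^k+1$ and $p^n-1$ are odd and the gcd cannot be $2$. You correctly import oddness from the paper's standing assumption $q=p^n$ odd, but it would be cleaner to state it as a hypothesis of the lemma.
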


In this paper, we show that for $r$ in Table \ref{table_r} $F_r$ is locally-APN with differential uniformity $\frac{q+1}{4}$ with boomerang uniformity at most $2$. These exponents $r$ form a subclass of those satisfying $\gcd(r,q-1)=2$, for which the monomial $x^r$ has low differential uniformity. This is relevant to our approach, since it tends to restrict the number of solutions of $(x+1)^r - x^r = b$ for each $b\in \Fqmul$.
 Note that the differential properties summarized in Table~\ref{table_r} 
are stated for the monomial $x^r$, 
but in some cases the equivalent form $x^{r+\frac{q-1}{2}}=x^r\chi(x)$ 
is actually PN or APN. 
Since $F_{r+\frac{q-1}{2}}(x)=x^{r+\frac{q-1}{2}}(1+\chi(x))=x^r(\chi(x)+1)=F_r(x)$ 
for all $x\in\Fq$, the two forms are identical as functions.

\renewcommand{\arraystretch}{1.3}
\begin{table}[h!]
	\begin{center}
		\begin{tabular}{cccc}
			\hline $r$ & Conditions & DU of $x^r$ & Ref. \\
			\hline $p^k+1$ & $n$ : odd & PN & \cite{CM97} \\
			$\frac{3^k+1}{2}$ & $p=3$, $n$ : odd, $\gcd(k,n)=1$ & PN & \cite{CM97} \\
			$3$ & $q\equiv 11\pmod{12}$ & APN & - \\
			$\frac{2q-1}{3}$ & $q\equiv 11\pmod{12}$ & APN & Inverse of $x^3$\\
			$\frac{3^{\frac{n+1}{2}}-1}{2}$ & $p=3$, $n$ : odd & APN & \cite{DMM+03} \\ 
			$\frac{3^{n+1}-1}{8}$ & $p=3$, $n$ : odd & APN & \cite{DMM+03} \\ 
			\hline
		\end{tabular}
		\caption{List of $r$ and Differential Properties of $x^r$.}\label{table_r}
	\end{center}
\end{table}
\renewcommand{\arraystretch}{1}

Moreover, when $r=\frac{3^k+1}{2}$ (the Coulter--Matthews exponent), 
$x^r$ is PN only for odd~$k$ \cite{CM97}. 
However, according to the linear equivalence established in Lemma~\ref{CM linear equiv}, 
our results hold for all~$k$.

Two functions $F$ and $F'$ defined on $\Fq$ are said to be \emph{linearly(affine) equivalent} 
if there exist linear(affine) permutations $L_1$ and $L_2$ over $\Fq$ such that 
$F' = L_1 \circ F \circ L_2$. 
It is well known that the differential and boomerang spectra are invariant under affine equivalence.

\begin{lemma}\label{CM linear equiv}
	Let $p$ be an odd prime and $n$ be odd. Then $F_{\frac{p^k+1}{2}}$ is linearly equivalent to $F_{\frac{p^{n-k}+1}{2}}$.
\end{lemma}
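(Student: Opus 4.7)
The plan is to exploit the Frobenius map $\sigma_{n-k}\colon y\mapsto y^{3^{n-k}}$, which is a linear permutation of $\mathbb{F}_{3^n}$, and show directly that $\sigma_{n-k}\circ F_{r}=F_{s}$, where $r=\tfrac{3^k+1}{2}$ and $s=\tfrac{3^{n-k}+1}{2}$. Since applying $\sigma_{n-k}$ to a sum of monomials merely raises each term to the $3^{n-k}$-th power (by the characteristic-$3$ Frobenius identity), this reduces to a reduction of exponents modulo $q-1$.

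First, I would compute
\[
r\cdot 3^{n-k}=\frac{(3^k+1)\cdot 3^{n-k}}{2}=\frac{3^n+3^{n-k}}{2}=\frac{q-1}{2}+s,
\]
so that for $x\in\mathbb{F}_q^{*}$, $x^{r\cdot 3^{n-k}}=x^{s}\chi(x)$.

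Next, using that $3^{n-k}$ is odd and hence $\tfrac{q-1}{2}\cdot 3^{n-k}\equiv\tfrac{q-1}{2}\pmod{q-1}$, I would verify
\[
\Bigl(r+\tfrac{q-1}{2}\Bigr)\cdot 3^{n-k}
=\bigl(s+\tfrac{q-1}{2}\bigr)+\tfrac{q-1}{2}\cdot 3^{n-k}
\equiv s\pmod{q-1},
\]
so $x^{(r+(q-1)/2)\cdot 3^{n-k}}=x^{s}$ for $x\in\mathbb{F}_q^{*}$.

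Combining the two, for every $x\in\mathbb{F}_q^{*}$ we have
\[
F_{r}(x)^{3^{n-k}}=\bigl(x^{r}+x^{r+(q-1)/2}\bigr)^{3^{n-k}}=x^{s}\chi(x)+x^{s}=x^{s}(1+\chi(x))=F_{s}(x),
\]
and both sides vanish at $x=0$. Setting $L_1(y)=y^{3^{n-k}}$ and $L_2=\mathrm{id}$ then yields $F_{s}=L_1\circ F_{r}\circ L_2$ with $L_1$ a linear permutation of $\mathbb{F}_{3^n}$, which is the claimed linear equivalence. There is no real obstacle here: the proof is essentially exponent arithmetic modulo $q-1$, and the only point requiring mild care is the parity remark ensuring $\tfrac{q-1}{2}\cdot 3^{n-k}\equiv \tfrac{q-1}{2}\pmod{q-1}$.
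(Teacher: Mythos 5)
Your proof is correct and uses essentially the same idea as the paper: the linear permutation is the Frobenius power map $x\mapsto x^{3^{n-k}}$, and the verification is the same exponent arithmetic modulo $q-1$. The only (cosmetic) difference is that you post-compose, showing $L_1\circ F_{r}=F_{s}$ with $L_2=\mathrm{id}$, whereas the paper pre-composes, showing $F_{r}\circ L=F_{s}$, which lets it avoid invoking the Frobenius additivity on the sum of monomials; both identities hold and either one establishes the claimed linear equivalence.
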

\begin{proof}
	Let $L(x) = x^{p^{n-k}}$. Then, for nonzero $x\in \Fq$,
	\begin{align*}
		(F_{\frac{p^k+1}{2}}\circ L)(x) &= \left( x^{p^{n-k}}\right) ^{\frac{p^k+1}{2}} (1+\chi(x^{p^{n-k}})) = x^{\frac{p^n+p^{n-k}}{2}} (1+\chi(x)) = x^{\frac{p^{n-k}+1}{2}} \cdot x^{\frac{p^{n}-1}{2}}(1+\chi(x)) \\
		&= x^{\frac{p^{n-k}+1}{2}}(\chi(x)+1) = F_{\frac{p^{n-k}+1}{2}}(x).
	\end{align*}
	Clearly, we can easily check that $(F_{\frac{p^k+1}{2}}\circ L)(0)=F_{\frac{p^{n-k}+1}{2}}(0) =0$.
\end{proof}

\section{Differential Properties}\label{sec_Diff}

In this section, we study the differential properties of $F_r$. We aim to count the number of solutions of 
\begin{equation}\label{DU_eqn}
	b=F_r(x+1)-F_r(x) = (x+1)^r(1+\chi(x+1))-x^r(1+\chi(x)).
\end{equation}
Let $D_{ij} (b)$ be the number of solutions of \eqref{DU_eqn} in $S_{ij}$ where $i,j\in\{0,1\}$. Then we have
\begin{equation*}
	\delta_{F_r} (1,b) = 
	\begin{cases}
		\sum D_{ij}(b) &\text{ if }b\ne 0,2,\\
		\sum D_{ij}(b) +1 &\text{ if }b= 0\text{ or }2,
	\end{cases}
\end{equation*}
 where the additional term $+1$ accounts for the solutions $x=0$ and $x=-1$, which correspond to $b=2$ and $b=0$, respectively.

\subsection{Differential Uniformity of $F_r$}\label{subsec_DU_Fr}

In this subsection, we show that $F_r$ is locally-APN for the exponents $r$ listed in Table~\ref{table_r}. First, we state the main theorem.

\begin{theorem}\label{DU_theorem}
	Let $r>1$ be an integer and $q=p^n \equiv 3 \pmod{4}$. If $(x+1)^r-x^r =b$ has at most $1$ solution in $S_{00}$ for all $b\in \Fqmul$ and $\gcd(r,q-1) \mid 2$, then $\delta_{F_r}(1,b)\le 2$ for all $b\in \Fqmul$ and
	\begin{equation}\label{Locally-APN DU}
		\delta_{F_r}=\delta_{F_r}(1,0)=\frac{q+1}{4},
	\end{equation}
	and hence $F_r$ is locally-APN. In particular, for each parameter set $(p,n,r)$ satisfying the corresponding conditions in Table~\ref{table_r}, $(x+1)^r-x^r =b$ has at most $1$ solution in $S_{00}$, and therefore, $F_r$ is locally-APN with \eqref{Locally-APN DU}.
\end{theorem}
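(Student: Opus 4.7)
The plan is to partition the solutions of $F_r(x+1)-F_r(x)=b$ using the stratification $\Fq\setminus\{0,-1\}=S_{00}\sqcup S_{01}\sqcup S_{10}\sqcup S_{11}$, and to handle the two boundary points $x=0,-1$ separately. Using $\chi(-1)=-1$ (from $q\equiv 3\pmod 4$) together with $F_r(x)=x^r(1+\chi(x))$, the equation simplifies on each stratum: it becomes $(x+1)^r-x^r=b/2$ on $S_{00}$, $x^r=-b/2$ on $S_{01}$, $(x+1)^r=b/2$ on $S_{10}$, and $0=b$ on $S_{11}$. A direct check shows that $x=0$ solves the equation exactly when $b=2$ and $x=-1$ exactly when $b=0$.

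For $b=0$ the count is essentially immediate: all of $S_{11}$ contributes (giving $(q-3)/4$ solutions) and so does $x=-1$, while the other strata yield nothing (e.g.\ on $S_{00}$ the equation forces $x=-1/2$, which is excluded since $\chi(-1/2)=-\chi(1/2)$). This gives $\delta_{F_r}(1,0)=(q+1)/4$.

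For $b\ne 0$ the $S_{00}$-contribution is at most $1$ by hypothesis. For $S_{01}$ and $S_{10}$, the condition $\gcd(r,q-1)\mid 2$ ensures that $x^r=c$ has at most two solutions $\{x_0,-x_0\}$ in $\Fqmul$; since $\chi(-x_0)=-\chi(x_0)$, at most one of them has $\chi=1$, so each stratum contributes at most $1$. The decisive step is the observation that $S_{01}$ can be nonempty only when $\chi(-b/2)=1$ and $S_{10}$ only when $\chi(b/2)=1$, and these two conditions are mutually exclusive because $\chi(-1)=-1$; hence the combined contribution from $S_{01}\cup S_{10}$ is at most $1$, giving $\delta_{F_r}(1,b)\le 2$. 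The edge case $b=2$ is handled by directly verifying that the would-be $S_{10}$-solutions $x\in\{0,-2\}$ of $(x+1)^r=1$ both fail $S_{10}$-membership (for instance $\chi(-1)=-1$ rules out $x=-2$), so the boundary point $x=0$ coexists with at most the single $S_{00}$-solution. Combined with Lemma~\ref{Fruproperty2_lemma}, this yields $\delta_{F_r}=\delta_{F_r}(1,0)=(q+1)/4$, hence $F_r$ is locally-APN.

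For the \emph{in particular} part, the $S_{00}$-hypothesis must be verified for each row of Table~\ref{table_r}. The two PN rows are trivial, since $(x+1)^r-x^r=b$ then has exactly one solution. For $r=3$ the equation is a quadratic with root-sum $-1$, so the two roots are $\{x_1,-1-x_1\}$; if $x_1\in S_{00}$ then $\chi(-1-x_1)=-\chi(x_1+1)=-1$, placing the other root in $S_{11}$. For $r=(2q-1)/3$ the substitution $u=x^r,\ v=(x+1)^r$ combined with $v-u=b$ and $v^3-u^3=1$ produces the quadratic $u^2+bu+(b^3-1)/(3b)=0$; a short Vieta computation then gives $x_1+x_2=u_1^3+u_2^3=-1$, and the same $S_{00}\to S_{11}$ pairing applies. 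The main obstacle is exhibiting an analogous involutive pairing for the two remaining characteristic-$3$ APN exponents $r=(3^{(n+1)/2}-1)/2$ and $r=(3^{n+1}-1)/8$; these are Dobbertin-type exponents whose derivative polynomials are more intricate, and finding the right pairing that sends $S_{00}$ outside itself is the genuinely nontrivial step. Once such a pairing is produced, the rest of the argument proceeds exactly as in the $r=3$ case.
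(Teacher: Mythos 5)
Your argument for the conditional part of the theorem is correct and is essentially the paper's own proof: the same stratification into $S_{00},S_{01},S_{10},S_{11}$ together with the boundary points $x=0,-1$ (contributing only at $b=2$ and $b=0$), the same mutual exclusion of $S_{01}$ and $S_{10}$ via the sign of $\chi\left(\frac{b}{2}\right)$, and the same $\pm$-pairing showing that $x^r=c$ has at most one solution with $\chi=1$ when $\gcd(r,q-1)\mid 2$. Your verifications for the PN rows, for $r=3$ (root sum $-1$, so the companion root satisfies $\chi(-1-x_1)=-\chi(x_1+1)=-1$ and leaves $S_{00}$), and for $r=\frac{2q-1}{3}$ (the Vieta computation $u_1^3+u_2^3=-1$) are also correct; the last is arguably cleaner than the paper's explicit radical formula for the two roots, which it only uses to read off $x_2=-x_1-1$.

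The genuine gap is exactly where you flag it: the rows $r=\frac{3^{(n+1)/2}-1}{2}$ and $r=\frac{3^{n+1}-1}{8}$. You leave open the construction of an involutive pairing sending $S_{00}$ off itself, and it is not clear that such a pairing exists in the form you are seeking. The paper closes these cases differently: it observes that the proof of APN-ness in \cite{DMM+03} already establishes that $(x+1)^r-x^r=b$ has at most one solution in $S_{00}\cup S_{11}$ and at most one in $S_{01}\cup S_{10}$, and the bound $D_{00}(b)\le 1$ is an immediate corollary of the first of these two statements --- no new pairing or computation is needed, only the right extraction from the existing APN argument. Until you either supply that citation-level fact or produce the pairing you describe, the ``in particular'' clause of the theorem is unproved for those two exponents.
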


Note that if $x^r$ is PN, then the assumption of the above theorem clearly holds. 
In this subsection, we prove Theorem~\ref{DU_theorem} and verify its assumption for the exponents $r$ listed in Table~\ref{table_r}. 
For the cases $r=3$ and $r=\frac{2q-1}{3}$, the verification of the assumption will be carried out separately in Sections~\ref{subsec_DS_F3} and \ref{subsec_DS_F3inv}, respectively, where we also determine the corresponding differential spectra. 
To handle the remaining cases, we analyze each $D_{ij}(b)$.

\begin{lemma}\label{DU_lemma S11}
	If $q\equiv 3 \pmod{4}$, then 
	\begin{equation*}
		D_{11} (b)=
		\begin{cases}
			\frac{q-3}{4}, &\text{ if }b=0,\\
			0, &\text{ otherwise.}
		\end{cases}
	\end{equation*}
\end{lemma}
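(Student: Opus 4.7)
The plan is to exploit the fact that on $S_{11}$ both factors $1+\chi(x)$ and $1+\chi(x+1)$ appearing in $F_r$ vanish identically, so the difference \eqref{DU_eqn} collapses to $0$ for every $x\in S_{11}$. This reduces the computation of $D_{11}(b)$ to merely counting the cardinality of $S_{11}$, which is already handled by Lemma~\ref{Sset num of elts}.

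Concretely, I would first recall that for $x\in S_{11}$, by definition $\chi(x)=\chi(x+1)=-1$. Substituting into $F_r(y)=y^r(1+\chi(y))$ gives $F_r(x)=x^r\cdot 0=0$ and $F_r(x+1)=(x+1)^r\cdot 0=0$. Consequently the left-hand side of \eqref{DU_eqn} evaluates to
\begin{equation*}
    F_r(x+1)-F_r(x) \;=\; 0-0 \;=\; 0
\end{equation*}
for every $x\in S_{11}$, regardless of the exponent $r$.

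From this, the count $D_{11}(b)$ splits trivially by cases: if $b\neq 0$, then \eqref{DU_eqn} has no solution in $S_{11}$, giving $D_{11}(b)=0$; if $b=0$, then every element of $S_{11}$ is a solution, giving $D_{11}(0)=\#S_{11}$. Applying Lemma~\ref{Sset num of elts} under the hypothesis $q=p^n\equiv 3\pmod{4}$ yields $\#S_{11}=\frac{p^n-3}{4}$, which matches the claimed value.

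There is really no substantive obstacle here — the whole statement is an immediate consequence of the vanishing of $1+\chi$ on nonsquares combined with the known cardinality of $S_{11}$. The only care needed is to keep track of the sign convention in the definition of $S_{ij}$ (namely that the index~$1$ corresponds to $\chi=-1$), so that the correct branch of Lemma~\ref{Sset num of elts} is invoked.
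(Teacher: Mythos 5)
Your proof is correct and follows exactly the same route as the paper: observe that both terms of \eqref{DU_eqn} vanish on $S_{11}$ because $1+\chi(\cdot)=0$ there, forcing $b=0$, and then invoke Lemma~\ref{Sset num of elts} for the count $\#S_{11}=\frac{p^n-3}{4}$. The paper's version is just a terser statement of the same argument.
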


\begin{proof}
	If $x\in S_{11}$, then \eqref{DU_eqn} implies $b=0$.
	By Lemma \ref{Sset num of elts}, we obtain $\#S_{11}=\frac{q-3}{4}$, as desired.
\end{proof}

\begin{lemma}\label{DU_lemma 00}
	Let $q\equiv 3 \pmod{4}$ and $b\in \Fq$. If $(x+1)^r-x^r =b$ has at most $1$ solution in $S_{00}$ for all $b\in \Fqmul$, then $D_{00}(b)\le 1$. If $\gcd(r,q-1)\in \{1,2\}$, then $D_{00}(0)=0$.
\end{lemma}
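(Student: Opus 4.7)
The plan is to observe that on $S_{00}$ the factors $1+\chi(x)$ and $1+\chi(x+1)$ in \eqref{DU_eqn} both equal $2$, which collapses the defining equation of $D_{00}(b)$ to the single polynomial equation $(x+1)^r - x^r = b/2$. The count $D_{00}(b)$ then reduces to counting $x \in S_{00}$ satisfying this equation, and the lemma becomes an almost immediate consequence of the two hypotheses.

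For $b \ne 0$, the right-hand side $b/2$ lies in $\Fqmul$ (since $p$ is odd), and the standing assumption ``$(x+1)^r - x^r = b$ has at most $1$ solution in $S_{00}$ for all $b\in\Fqmul$'' applies verbatim with $b$ replaced by $b/2$. This yields $D_{00}(b) \le 1$.

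For $b=0$, I would rewrite $(x+1)^r = x^r$ as $(1 + 1/x)^r = 1$, which is legitimate since $x \in S_{00}$ forces $\chi(x)=1$ and hence $x \ne 0$. Thus $1 + 1/x$ must be an $r$-th root of unity in $\Fqmul$, and the number of such roots is exactly $\gcd(r,q-1)$. Under the hypothesis $\gcd(r,q-1)\mid 2$, the only candidates are $1 + 1/x = \pm 1$. The value $+1$ gives $1/x = 0$, impossible. The value $-1$ forces $x = -1/2$, hence $x+1 = 1/2$. It then remains to rule out $x=-1/2\in S_{00}$: we compute $x(x+1) = -1/4$, and using $\chi(-1) = -1$ (which holds because $q\equiv 3 \pmod 4$), we get $\chi(x)\chi(x+1) = \chi(-1/4) = -1$, contradicting $\chi(x)=\chi(x+1)=1$. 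Hence no such $x$ exists and $D_{00}(0)=0$.

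There is no serious obstacle here: the argument is a direct unpacking of the definition of $S_{00}$ together with the elementary classification of roots of unity when $\gcd(r,q-1)\le 2$. The only subtle point requiring care is verifying that the unique algebraic candidate $x=-1/2$ fails the $S_{00}$-membership test, which is a one-line quadratic-character computation that crucially uses $q\equiv 3\pmod 4$.
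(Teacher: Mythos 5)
Your proof is correct and follows essentially the same route as the paper: both reduce the equation on $S_{00}$ to $(x+1)^r - x^r = b/2$ and invoke the hypothesis for $b\neq 0$, and for $b=0$ both use $\gcd(r,q-1)\mid 2$ to force $x=-\tfrac12$ and then rule it out via $\chi(-1)=-1$. Your quadratic-character check on $x(x+1)=-\tfrac14$ is just a slightly cleaner packaging of the paper's observation that $\chi\bigl(-\tfrac12\bigr)\neq\chi\bigl(\tfrac12\bigr)$.
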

\begin{proof}
	If there is a solution $x\in S_{00}$, then we have 
	\begin{equation}\label{DU_eqn 00}
		(x+1)^r - x^r = \frac{b}{2}
	\end{equation}
	By assumption, we can easily see that $D_{00}(b)\le 1$.

	If there is a solution $x\in S_{00}$ in \eqref{DU_eqn} with $b=0$, then we have $(x+1)^r=x^r$. If $\gcd(r,q-1)=1$, then we have $x=x+1$, a contradiction. If $\gcd(r,q-1)=2$, then we obtain $x^2 = (x+1)^2$, which implies $x=-\frac{1}{2}$. Then, we get $1=\chi(x)\chi(x+1) = \chi\left( -\frac 1 4 \right) = \chi(-1)$, which is a contradiction since $\chi(-1)=-1$ when $q\equiv 3 \pmod{4}$. Therefore, $D_{00}(0)=0$.
\end{proof}

If $x^r$ is PN, then there is at most one such $x$ satisfying \eqref{DU_eqn 00}, we have $D_{00}(b)\le 1$.

\begin{lemma}\label{DU_lemma 01 zero}
	Let $q\equiv 3\pmod{4}$ and $b\in \Fq$. Then, $D_{01}(b) = 0$ or  $D_{10}(b) = 0$.
\end{lemma}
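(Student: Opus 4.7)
The plan is to show that the two sets $S_{01}$ and $S_{10}$ cannot simultaneously contribute a solution to $(x+1)^r(1+\chi(x+1)) - x^r(1+\chi(x)) = b$, by exploiting the fact that exactly one of the two $(1+\chi)$ factors vanishes on each of these sets. This kills one of the two terms in \eqref{DU_eqn}, leaving a single monomial, whose quadratic character we can then evaluate.

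First I would substitute the definitions of $S_{01}$ and $S_{10}$ into \eqref{DU_eqn}. For $x \in S_{01}$ we have $\chi(x)=1$ and $\chi(x+1)=-1$, so the equation collapses to $b = -2x^r$, i.e.\ $x^r = -b/2$. Symmetrically, for $y \in S_{10}$ we have $\chi(y)=-1$ and $\chi(y+1)=1$, so \eqref{DU_eqn} collapses to $b = 2(y+1)^r$, i.e.\ $(y+1)^r = b/2$. The case $b=0$ is handled immediately: $x^r=0$ forces $x=0 \notin S_{01}$ (since $\chi(0)=0$), so $D_{01}(0)=0$, and analogously $(y+1)^r = 0$ forces $y=-1 \notin S_{10}$, so $D_{10}(0)=0$.

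For $b \ne 0$, I would argue by contradiction: assume there exist $x \in S_{01}$ and $y \in S_{10}$ both contributing. Since $\chi(x)=1$, we have $\chi(x^r) = \chi(x)^r = 1$, hence $\chi(-b/2)=1$. Similarly $\chi(y+1)=1$ forces $\chi((y+1)^r) = 1$, hence $\chi(b/2)=1$. But $q \equiv 3 \pmod 4$ gives $\chi(-1) = -1$, so $\chi(-b/2) = -\chi(b/2)$, contradicting that both equal $1$. Therefore at least one of $D_{01}(b), D_{10}(b)$ must vanish.

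There is no real obstacle here; the lemma is essentially a one-line parity argument once the two defining equations are extracted. The only point one must be careful about is the exclusion of $x=0$ and $y=-1$ from the respective $S_{ij}$ sets (since $\chi(0)=0$ disqualifies them), which is exactly what rules out the $b=0$ degeneracy.
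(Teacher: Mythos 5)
Your proof is correct and takes essentially the same route as the paper's: on $S_{01}$ the equation collapses to $x^r=-b/2$ and on $S_{10}$ to $(y+1)^r=b/2$, and the resulting character conditions $\chi(-b/2)=1$ and $\chi(b/2)=1$ are incompatible because $\chi(-1)=-1$ when $q\equiv 3\pmod 4$. Your explicit handling of $b=0$ is a minor addition that the paper leaves implicit (and in fact the paper's proof contains a small sign typo in the $S_{10}$ case that your write-up avoids).
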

\begin{proof}
	If there is a solution $x\in S_{01}$ of \eqref{DU_eqn}, then $x^r = -\frac{b}{2}$. Since $x\in S_{01}$, we have $\chi\left( \frac{b}{2}\right)=\chi(-x^r)=-1$. By the way, if there is a solution $x'\in S_{10}$ of \eqref{DU_eqn}, then $(x'+1)^r = \frac{b}{2}$. Since $x'\in S_{10}$, we obtain $\chi\left( \frac{b}{2}\right)=\chi\left((x'+1)^r\right)=1$. Since $\chi\!\left(\frac{b}{2}\right) = 1$ and $\chi\!\left(\frac{b}{2}\right) = -1$ cannot occur simultaneously, we conclude that $D_{01}(b) = 0$ or $D_{10}(b) = 0$.
\end{proof}

\begin{lemma}\label{DU_lemma 01 le1}
	Let $q\equiv 3\pmod{4}$ and $b\in \Fq$. If $\gcd(r,q-1)\in \{1,2\}$, then $D_{01}(b) \le 1$ and  $D_{10}(b) \le 1$. Furthermore, if $b\in \{0,2\}$, $D_{01}(b)=D_{10}(b)=0$.
\end{lemma}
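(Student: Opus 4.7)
The plan is to reduce \eqref{DU_eqn} separately on $S_{01}$ and $S_{10}$: substituting $\chi(x)=1$ and $\chi(x+1)=-1$ collapses it on $S_{01}$ to $x^r=-b/2$, and symmetrically $\chi(x)=-1$, $\chi(x+1)=1$ collapses it on $S_{10}$ to $(x+1)^r=b/2$. Thus counting $D_{01}(b)$ reduces to counting $r$-th roots of $-b/2$ that are squares of $\Fqmul$ (since $\chi(x)=1$), and $D_{10}(b)$ counts $r$-th roots of $b/2$ whose value serves as $x+1$ and is a square.

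For the bound by $1$, I would use that the hypothesis $\gcd(r,q-1)\le 2$ limits the number of $r$-th roots of a nonzero element to at most $2$. When $\gcd(r,q-1)=1$, the $r$-th power map is a bijection and uniqueness is immediate. When $\gcd(r,q-1)=2$, $r$ must be even, so the two $r$-th roots of a nonzero element form a pair $\{y,-y\}$. Since $q\equiv 3\pmod{4}$ we have $\chi(-1)=-1$, so $\chi(y)$ and $\chi(-y)$ have opposite signs; exactly one member of the pair is a square, yielding at most one candidate meeting the square-condition required by membership in $S_{01}$ or $S_{10}$.

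For the special values $b\in\{0,2\}$ the argument is direct. If $b=0$, then $x^r=0$ forces $x=0\notin S_{01}$, and $(x+1)^r=0$ forces $x=-1\notin S_{10}$. If $b=2$, then $x^r=-1$ combined with $\chi(x)=1$ would give $1=\chi(x)^r=\chi(x^r)=\chi(-1)=-1$, a contradiction. Meanwhile $(x+1)^r=1$ restricts $x+1$ to the group of $r$-th roots of unity in $\Fqmul$, which has size $\gcd(r,q-1)\le 2$; hence $x+1\in\{1\}$ or $\{\pm 1\}$, giving $x=0\notin S_{10}$, or $x=-2$ with $\chi(x+1)=\chi(-1)=-1$, again contradicting $x\in S_{10}$.

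No serious obstacle is expected: the argument is a short case analysis whose only ingredient beyond the reduction of \eqref{DU_eqn} is the fact that $\chi(-1)=-1$ when $q\equiv 3\pmod{4}$, which makes the pair $\{y,-y\}$ of $r$-th roots split evenly between squares and non-squares.
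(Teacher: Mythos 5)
Your proof is correct and follows essentially the same route as the paper: reduce \eqref{DU_eqn} on $S_{01}$ to $x^r=-b/2$ (and on $S_{10}$ to $(x+1)^r=b/2$), bound the number of $r$-th roots by $\gcd(r,q-1)\le 2$, and use $\chi(-1)=-1$ to see that the root pair $\{y,-y\}$ contains at most one square. The only difference is that you spell out the $b=2$ case on $S_{10}$ via the $r$-th roots of unity, which the paper dispatches with ``by symmetry''; your explicit version is fine and arguably more careful.
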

\begin{proof}
	 If there is a solution $x\in S_{01}$ of \eqref{DU_eqn}, then \eqref{DU_eqn} reduces to $x^r = -\frac{b}{2}$. If $\gcd(r,q-1)=1$, then there is at most one $x\in \Fq$ satisfying $x^r = -\frac{b}{2}$. If $\gcd(r,q-1)=2$, then there are at most $2$ solutions of $x^r = -\frac{b}{2}$. In this case, $r$ is even, and hence if $x=\theta$ is a solution of $x^r = -\frac{b}{2}$, then the other solution is $x=-\theta$. Since $\chi(\theta)\ne \chi(-\theta)$, at most one of $\theta$ and $-\theta$ belongs to $S_{01}$.
	
	When $b=0$, $x^r = -\frac{b}{2} =0$ implies $x=0$, and hence there is no solution in $S_{01}$. When $b=2$, if there is $x\in S_{01}$ satisfying $x^r = -\frac{b}{2} =-1$, then $\chi(x^r) = \left( \chi(x)\right) ^r = 1=\chi(-1)$, a contradiction. Therefore, $D_{01}(0)=D_{01}(2)=0$.
	
	By symmetry, the other proofs are similar to above, and we omit here. 
\end{proof}

We have now analyzed all the cases of $D_{ij}(b)$, 
and are ready to prove the main theorem of this subsection for all $r$ except $r=3$ and $r=\frac{2q-1}{3}$. 
It is worth noting that the lemmas established here also hold for these two exponents, 
and hence we have~\eqref{delta1b_inequal}, which will be proved in Lemmas~\ref{DU_lemma r=3 S00} and~\ref{DU_lemma r=3^(-1) S00}. 
These lemmas will also be used in the following subsections, 
where we investigate the differential spectra of $F_r$ for $r=3$ and $r=\frac{2q-1}{3}$.

\begin{proof}[Proof of Theorem \ref{DU_theorem}]
	If we substitute $x=0$ and $x=-1$ to \eqref{DU_eqn}, then we have $b=2$ and $b=0$, respectively. Thus, by lemmas in this subsection, we can see that $\delta_{F_r}(1,0) = \Frac{q+1}{4}$, and
	\begin{equation}\label{delta1b_inequal}
		\delta_{F_r}(1,b)\le D_{00}(b) +1,\ \ \text{ if }b\in \Fqmul.  
	\end{equation}
	Thus, the first part of Theorem~\ref{DU_theorem} is proved, namely that $F_r$ is locally-APN with $\delta_{F_r}=\delta_{F_r}(1,0)=\frac{q+1}{4}$ under the given assumptions. 
	We now turn to the second part, where we verify that the condition $(x+1)^r - x^r = b$ has at most one solution in $S_{00}$ for the exponents $r$ listed in Table~\ref{table_r}. 
	It is clear that $(x+1)^r - x^r = b$ has at most $1$ solution in $S_{00}$ for all $b\in \Fqmul$, when $x^r$ is PN. 
	For $r\in\left\{\frac{3^{\frac{n+1}{2}}-1}{2},\ \frac{3^{n+1}-1}{8}\right\}$, 
	it was shown in~\cite{DMM+03} that the equation $(x+1)^r - x^r = b$ admits at most one solution in each of the sets $S_{00}\cup S_{11}$ and $S_{01}\cup S_{10}$. 
	Consequently, the condition $D_{00}(b)\le 1$ follows directly from their argument.
\end{proof}

\subsection{Differential Spectrum when $r=3$}\label{subsec_DS_F3}

In this subsection, we study the differential spectrum of $F_3$ when $q\equiv 11\pmod{12}$. We wish to compute the number of solutions of 
\begin{equation}\label{DU_eqn r=3}
	b=F_3(x+1)-F_3(x) = (x+1)^3(1+\chi(x+1))-x^3(1+\chi(x)).
\end{equation}

We already observe that $\delta_{F_3}(1,0)=\frac{q+1}{4}$, in Section \ref{subsec_DU_Fr}. By Lemmas \ref{DU_lemma S11}, \ref{DU_lemma 01 zero} and \ref{DU_lemma 01 le1}, we have
\begin{equation}\label{delta1b_eqn}
	\delta_{F_r}(1,b) = 
	\begin{cases}
		D_{00}(b) + 1 & \text{ if }b=2,\\
		D_{00}(b) + D_{01}(b) + D_{10}(b)  & \text{ otherwise, }
	\end{cases}
\end{equation}
for $b\in \Fqmul$. We first characterize $D_{00}(b)$, $D_{01}(b)$ and $D_{10}(b)$.

\begin{lemma}\label{DU_lemma r=3 S00}
	Let $r=3$ and $q\equiv 3\pmod{4}$. Then,
	\begin{equation*}
		D_{00} (b)=
		\begin{cases}
			1, &\text{ if }\chi(2b-1)=\chi(2(b-2))=\chi(3),\\
			0, &\text{ otherwise.}
		\end{cases}
	\end{equation*}
\end{lemma}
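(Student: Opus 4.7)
The plan is to translate the $S_{00}$-membership condition into an explicit quadratic in $x$, count its roots in $\Fq$ via discriminant analysis, and then use Vieta's relations together with $\chi(-1)=-1$ to decide how many of those roots actually lie in $S_{00}$.

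For $x\in S_{00}$ one has $1+\chi(x)=1+\chi(x+1)=2$, so \eqref{DU_eqn r=3} reduces to $b=2\bigl((x+1)^3-x^3\bigr)=6x^2+6x+2$, i.e.\ to the quadratic $6x^2+6x+(2-b)=0$. Its discriminant is $12(2b-1)$; since $12=4\cdot 3$ gives $\chi(12)=\chi(3)$, two distinct roots in $\Fq$ exist exactly when $2b-1\ne 0$ and $\chi(2b-1)=\chi(3)$. The degenerate case $2b-1=0$ produces only the double root $x=-1/2$, which is excluded from $S_{00}$ because $\chi(-1/2)=-\chi(1/2)\ne\chi(1/2)=\chi(x+1)$. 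Hence if $\chi(2b-1)\ne\chi(3)$, then $D_{00}(b)=0$, matching the claim.

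Next I assume $\chi(2b-1)=\chi(3)$ and let $x_1,x_2$ be the two distinct roots. By Vieta, $x_1+x_2=-1$, so $x_1+1=-x_2$ and $x_2+1=-x_1$; since $\chi(-1)=-1$, this yields $\chi(x_i+1)=-\chi(x_j)$ for $\{i,j\}=\{1,2\}$. Membership $x_i\in S_{00}$ therefore forces $\chi(x_i)=1$ and $\chi(x_j)=-1$, so at most one of the two roots lies in $S_{00}$, and exactly one does iff $\chi(x_1)\ne\chi(x_2)$, i.e.\ iff $\chi(x_1x_2)=-1$.

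To finish, I would translate the condition $\chi(x_1x_2)=-1$ back to one on $b$: Vieta gives $x_1x_2=(2-b)/6$, and a direct character computation using $\chi(-1)=-1$ and $\chi(6)=\chi(2)\chi(3)$ yields
\begin{equation*}
	\chi(x_1x_2)=\chi\!\left(\frac{2-b}{6}\right)=-\chi\bigl(2(b-2)\bigr)\,\chi(3),
\end{equation*}
so $\chi(x_1x_2)=-1$ is equivalent to $\chi(2(b-2))=\chi(3)$. Combined with the existence condition, this gives $D_{00}(b)=1$ iff $\chi(2b-1)=\chi(2(b-2))=\chi(3)$, and $D_{00}(b)=0$ otherwise. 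No step looks like a real obstacle; the only point requiring a little care is the boundary case $b=2$, where one root becomes $0\notin S_{00}$, and the formula correctly returns $0$ because $\chi(2(b-2))=\chi(0)\ne\chi(3)$.
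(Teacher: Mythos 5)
Your proof is correct and follows essentially the same route as the paper: reduce to the quadratic $6x^2+6x+(2-b)=0$, use the discriminant $12(2b-1)$ to get the existence condition $\chi(2b-1)=\chi(3)$, and use Vieta's relations together with $\chi(-1)=-1$ to show that exactly one of the two roots lies in $S_{00}$ precisely when $\chi(x_1x_2)=-1$, which translates to $\chi(2(b-2))=\chi(3)$. The paper argues the membership step via $\chi(x(x+1))=\chi\bigl(\tfrac{b-2}{6}\bigr)$ rather than via the sum $x_1+x_2=-1$, but this is only a cosmetic difference, and your handling of the degenerate cases $b=\tfrac12$ and $b=2$ is sound.
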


\begin{proof}
	If $x\in S_{00}$, then \eqref{DU_eqn r=3} becomes $\frac{b}{2} = (x+1)^3-x^3 = 3x^2 + 3x +1$, so we get 
	\begin{equation}\label{DU_eqn1 r=3 S00}
		x^2+x - \frac{b-2}{6}=0.
	\end{equation}
	The discriminant of \eqref{DU_eqn1 r=3 S00} is $\frac{2b-1}{3}$.
	Hence, \eqref{DU_eqn1 r=3 S00} has two solutions if and only if $\chi(2b-1)=\chi(3)$. If \eqref{DU_eqn1 r=3 S00} has two solutions $x_1$ and $x_2$, then $x_1x_2=- \frac{b-2}{6}$. 
	Since $x \in S_{00}$, we have $\chi(x(x+1))=\chi\left( \frac{b-2}{6}\right)=1$. Since $\chi(x_1x_2) = \chi\left( -\frac{b-2}{6}\right)=-1$, \eqref{DU_eqn1 r=3 S00} has at most one solution in $S_{00}$. 
	Without loss of generality, we assume that $\chi(x_1)=1$ and $\chi(x_2)=-1$. Since $\chi(6(b-2))=\chi(x_1(x_1+1))=1$ from \eqref{DU_eqn1 r=3 S00}, we have $\chi(x_1+1)=1$ so $x_1 \in S_{00}$.
	
	If $b=\frac{1}{2}$, then \eqref{DU_eqn1 r=3 S00} has one solution $x=-\frac{1}{2}$. But, we have $\chi(x+1) = \chi\left( \frac{1}{2} \right) \ne \chi\left( -\frac{1}{2} \right)=\chi(x)$ in this case, and hence $-\frac{1}{2}\not \in S_{00}$. So, $D_{00} \left(\frac{1}{2} \right)=0$. 
\end{proof}

Next, we characterize all $b$ satisfying $D_{01}(b) + D_{10}(b) = 1$.
Note that we have already shown in Lemma \ref{DU_lemma 01 le1} that $D_{01} (b)\le 1$ and $D_{10} (b)\le 1$.

\begin{lemma}\label{DU_lemma r=3}
	Let $r=3$ and $q\equiv 11\pmod{12}$. Then, $D_{01} (b) + D_{10} (b) = 1$ if and only if $\chi\left( b\left( b^{\frac{2q-1}{3}} - \alpha\right) \right) =-1$, where $\alpha = 2^{\frac{2q-1}{3}}$. 
\end{lemma}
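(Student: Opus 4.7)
The plan is to exploit $\gcd(3,q-1)=1$ (which holds since $q\equiv 2\pmod 3$), so the cubing map is a bijection on $\Fq$ whose inverse is $y\mapsto y^{(2q-1)/3}$. On $S_{01}$, equation~\eqref{DU_eqn r=3} collapses to $-2x^3=b$ because $1+\chi(x+1)=0$, producing a \emph{unique} candidate $x_0=(-b/2)^{1/3}\in\Fq$; on $S_{10}$ it collapses to $2(x+1)^3=b$, producing a unique candidate $y_0=(b/2)^{1/3}-1$. Thus everything reduces to testing whether $x_0\in S_{01}$ or $y_0\in S_{10}$, which is a pure character computation.

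To carry this out, I would set $\beta:=b^{(2q-1)/3}$, so $\beta^3=b$ and, by definition of $\alpha$, $\alpha^3=2$. A quick parity check shows that $(2q-1)/3$ is odd when $q\equiv 11\pmod{12}$, hence $\chi(\beta)=\chi(b)$ and $\chi(\alpha)=\chi(2)$. Then $x_0=-\beta/\alpha$ and $y_0=(\beta-\alpha)/\alpha$, and using $\chi(-1)=-1$ (from $q\equiv 3\pmod 4$) one computes
\begin{align*}
\chi(x_0)&=-\chi(b)\chi(2), & \chi(x_0+1)&=-\chi(\beta-\alpha)\chi(2),\\
\chi(y_0)&=\chi(\beta-\alpha)\chi(2), & \chi(y_0+1)&=\chi(b)\chi(2).
\end{align*}
Consequently $x_0\in S_{01}$ iff $\chi(b)\chi(2)=-1$ and $\chi(\beta-\alpha)\chi(2)=1$, while $y_0\in S_{10}$ iff $\chi(b)\chi(2)=1$ and $\chi(\beta-\alpha)\chi(2)=-1$.

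Multiplying the two character equations in either case gives $\chi(b)\chi(\beta-\alpha)=-1$, i.e., $\chi\!\left(b\!\left(b^{(2q-1)/3}-\alpha\right)\right)=-1$, which settles the forward direction. Conversely, this sign condition forces $b\ne 0$ and $\beta\ne\alpha$ (hence $b\ne 2$) and makes $\chi(b)$ and $\chi(\beta-\alpha)$ opposite in sign; depending on the value of $\chi(2)$, exactly one of the two case splits above is realized, producing exactly one of $D_{01}(b)=1$ or $D_{10}(b)=1$. The other already vanishes by Lemma~\ref{DU_lemma 01 zero}, so $D_{01}(b)+D_{10}(b)=1$.

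I do not anticipate a serious obstacle: the entire argument is essentially a careful bookkeeping of the sign $\chi(-1)=-1$ through the four character computations, together with a check of the degenerate values $b\in\{0,2\}$ where $\chi(b(\beta-\alpha))=0$ and the assertion of the lemma is trivially true on both sides. The main simplification is provided by the bijectivity of cubing, which turns each of $D_{01}(b)$ and $D_{10}(b)$ into a yes/no test at a single explicit point.
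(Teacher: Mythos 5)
Your proposal is correct and follows essentially the same route as the paper: both arguments use that cubing is a bijection (via the $\frac{2q-1}{3}$-th power inverse, with this exponent odd) to pin down the unique candidate solution in each of $S_{01}$ and $S_{10}$, compute the four relevant character values, and multiply the resulting sign conditions to obtain $\chi\bigl(b\bigl(b^{\frac{2q-1}{3}}-\alpha\bigr)\bigr)=-1$. Your explicit bookkeeping via $\beta=b^{\frac{2q-1}{3}}$ and $\alpha$ with $x_0=-\beta/\alpha$, $y_0=(\beta-\alpha)/\alpha$ is just a notational repackaging of the paper's computation.
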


\begin{proof}
	If $x\in S_{01}$, then \eqref{DU_eqn r=3} becomes 
	\begin{equation}\label{DU_eqn1 r=3}
		x^3=-\frac{b}{2}.
	\end{equation}
	Since $x\in S_{01}$, 
	\begin{equation}\label{DU_eqn2 r=3}
		1=\chi(x)=\chi(x^3) = \chi\left( -\frac{b}{2}\right)
	\end{equation}
	implies that $\chi(2b)=-1$. Since $\frac{2q-1}{3}$ is odd, raising both sides of \eqref{DU_eqn1 r=3} to the $\frac{2q-1}{3}$-th power, we have
	\begin{equation*}
		x=\left( -\frac{b}{2}\right) ^{\frac{2q-1}{3}} = -\left( \frac{b}{2}\right) ^{\frac{2q-1}{3}}.
	\end{equation*}
	Consequently, we have $\chi(x) = \chi\left( -\frac {b}{2}\right) =-\chi(2b)=1$. Moreover,
	$$\chi(x+1)=\chi\left( -\left( \frac{b}{2}\right) ^{\frac{2q-1}{3}} + 1\right)= \chi\left( -\frac{b^{\frac{2q-1}{3}} - 2^{\frac{2q-1}{3}}}{2^{\frac{2q-1}{3}}}\right)=\chi\left( -\frac{b^{\frac{2q-1}{3}} - 2^{\frac{2q-1}{3}}}{2}\right)=-1$$
	if and only if $\chi\left( -2\left( b^{\frac{2q-1}{3}} -\alpha \right) \right)=-1$. Therefore, we have $D_{01} (b)=1$ if and only if 
	\begin{equation*}
		\chi(2b)=\chi\left( -2\left( b^{\frac{2q-1}{3}} - \alpha\right)  \right)=-1.
	\end{equation*}
	We can similarly show that $D_{10} (b)=1$ if and only if 
	\begin{equation*}
		\chi(2b)=1\text{ and }\chi\left( 2\left( b^{\frac{2q-1}{3}} - \alpha\right) \right) =-1.
	\end{equation*}
	We can observe that $D_{01} (b)=1$ and $D_{10} (b)=1$ cannot occur simultaneously, and $D_{01} (b) + D_{10} (b)=1$ if and only if
	\begin{equation*}
		-1 = \chi(2b) \chi\left( 2\left( b^{\frac{2q-1}{3}} - \alpha\right) \right) = \chi\left( b\left( b^{\frac{2q-1}{3}} - \alpha\right) \right),
	\end{equation*}
	which completes the proof.
\end{proof}

For any function $F$ satisfying \eqref{locallyapn_cordef_eqn}, the differential spectrum of $F$ is defined to be the multiset $DS_F = \{\omega_i : 0\le i \le \delta_F\}$, where
\begin{equation*}
	\omega_i = \#\{b\in \Fq : \delta_F(1,b)=i\}. 
\end{equation*} 
The following identity for the differential spectrum is well-known :
\begin{equation}\label{DS_identity}
	\sum_{i=0}^{\delta_F}\omega_i = \sum_{i=0}^{\delta_F}i\cdot \omega_i = q
\end{equation}
Next, we prove the main theorem of this subsection.
\begin{theorem}\label{DS_theorem r=3}
	Let $q\equiv 11 \pmod{12}$ with $q>11$. Then, $F_3(x) = x^3+x^{\frac{q+5}{2}}$ is locally-APN. Furthermore, the differential spectrum of $F_3$ is given by
	\begin{equation*}
		DS_{F_3} = \left\{ \omega_0=\frac{3}{8}(q-3)-\frac{\Gamma}{2} ,\ \omega_1 = \frac{1}{2}(q+1)+\Gamma,\ \omega_2 = \frac{1}{8}(q-3)-\frac{\Gamma}{2},\ \omega_{\frac{q+1}{4}}=1 \right\},
	\end{equation*}
	where $\Gamma = \Sum_{x\in S}\chi(x)\chi(x-\alpha)$ with $\alpha = 2^{\frac{2q-1}{3}}$ and $S=\left\{ x\in \Fq : \chi(2x^3-1)=\chi(2x^3-4)=1\right\}$.
\end{theorem}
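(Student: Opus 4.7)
The approach is to decompose $\delta_{F_3}(1,b)$ by cases on $b$, using the lemmas in Subsection~\ref{subsec_DU_Fr} together with Lemmas~\ref{DU_lemma r=3 S00} and~\ref{DU_lemma r=3}, and then to evaluate the resulting cardinalities via the quadratic character sum formula of Lemma~\ref{chi_lemma quad}. First, I would isolate the special values of $b$: at $b = 0$, Lemmas~\ref{DU_lemma S11}, \ref{DU_lemma 00}, and~\ref{DU_lemma 01 le1} give $\delta_{F_3}(1,0) = D_{11}(0) + 1 = (q+1)/4$, contributing $\omega_{(q+1)/4} = 1$; at $b = 2$, Lemmas~\ref{DU_lemma r=3 S00} and~\ref{DU_lemma 01 le1} force $D_{00}(2) = D_{01}(2) = D_{10}(2) = 0$, so $\delta_{F_3}(1, 2) = 1$ by~\eqref{delta1b_eqn}, contributing $1$ to $\omega_1$. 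For $b \in \Fqmul \setminus \{2\}$, Lemmas~\ref{DU_lemma S11}, \ref{DU_lemma r=3 S00}, \ref{DU_lemma 01 le1}, and~\ref{DU_lemma r=3} combine to give $\delta_{F_3}(1, b) = D_{00}(b) + D_{01}(b) + D_{10}(b)$ with each summand in $\{0,1\}$, which establishes the locally-APN property. Setting $A = \{b : D_{00}(b) = 1\}$ and $B = \{b : D_{01}(b) + D_{10}(b) = 1\}$ inside $\Fqmul \setminus \{2\}$, one has $\omega_2 = |A \cap B|$, $\omega_1 = |A \triangle B| + 1$, and $\omega_0 = (q-2) - |A \cup B|$.

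The main computational step is the evaluation of $|A|$, $|B|$, and $|A \cap B|$. Since $q \equiv 11 \pmod{12}$, quadratic reciprocity gives $\chi(3) = 1$, so Lemma~\ref{DU_lemma r=3 S00} simplifies $A$ to $\{b : \chi(2b-1) = \chi(2b-4) = 1\}$. Expanding the joint indicator as $\tfrac{1}{4}(1 + \chi(2b-1))(1 + \chi(2b-4))$, summing over $b \in \Fq \setminus \{1/2, 2\}$, and applying Lemma~\ref{chi_lemma quad} to the quadratic $(2b-1)(2b-4)$ (whose discriminant is $9 \neq 0$) yields $|A| = (q-3)/4$. For $|B|$, the condition $q \equiv 2 \pmod{3}$ gives $\gcd(3, q-1) = 1$, so $c \mapsto c^3$ is a bijection on $\Fq$ with $(c^3)^{(2q-1)/3} = c$; the defining character sum of $B$ therefore reduces to $\Sum_{c \in \Fq} \chi(c(c - \alpha)) = -1$ by Lemma~\ref{chi_lemma quad}, giving $|B| = (q-1)/2$. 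Under the same substitution, the conditions defining $A$ transform into those defining $S$, yielding
\[
|A \cap B| = \Sum_{c \in S} \frac{1 - \chi(c)\chi(c - \alpha)}{2} = \frac{|S| - \Gamma}{2} = \frac{(q-3)/4 - \Gamma}{2},
\]
where $|S| = |A|$ via the cubing bijection.

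Assembling the pieces through inclusion--exclusion then produces $\omega_2 = (q-3)/8 - \Gamma/2$, $\omega_1 = (q+1)/2 + \Gamma$, and $\omega_0 = 3(q-3)/8 - \Gamma/2$, matching the claimed spectrum; identity~\eqref{DS_identity} serves as a consistency check. The main obstacle is careful bookkeeping of the boundary values: the uniformity spike at $b = 0$ and the isolated $+1$ at $b = 2$ arise from the $x = -1$ and $x = 0$ conventions in~\eqref{DU_eqn}, while $b \in \{1/2, 2\}$ must be removed from the sum defining $|A|$ because one of the factors $1 + \chi(2b-1)$ or $1 + \chi(2b-4)$ involves $\chi(0)$ there. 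The quantity $\Gamma$ itself remains an implicit character sum depending on the specific field, but it does not need to be evaluated further to obtain the stated spectrum.
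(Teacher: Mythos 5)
Your proposal is correct and follows essentially the same route as the paper: the same case split at $b=0$ and $b=2$, the same characterizations from Lemmas~\ref{DU_lemma r=3 S00} and~\ref{DU_lemma r=3}, the same cubing substitution turning the count into the character sum $\Gamma$ over $S$, and Lemma~\ref{chi_lemma quad} for the quadratic sums. The only cosmetic difference is that you recover $\omega_0$ and $\omega_1$ by inclusion--exclusion from explicitly computed $|A|$ and $|B|$, whereas the paper computes only $\omega_2$ and then invokes the moment identity~\eqref{DS_identity}; both yield the stated spectrum.
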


\begin{proof}
	Combining \eqref{delta1b_inequal} with Lemma \ref{DU_lemma 00}, we can see that $\delta_{F_3}(1,b) \le 2$ for all $b\in \Fqmul$. 
	
	Next, we count the number of $b\in \Fqmul$ satisfying $\delta_{F_3}(1,b)=2$. By Lemma \ref{DU_lemma r=3 S00}, we have $D_{00}(2)=0$, and hence $\delta_{F_3}(1,2)=1$, using \eqref{delta1b_eqn}. Hence, $\delta_{F_3}(1,b)=2$ if and only if $D_{00} (b)=1$ and $D_{01} (b) + D_{10} (b)=1$. Since $q\equiv 11 \pmod{12}$, we have $\chi(3)=1$. Thus, by Lemmas \ref{DU_lemma r=3 S00} and \ref{DU_lemma r=3}, $\delta_{F_3}(1,b)=2$ if and only if $\chi(2b-1)=\chi(2b-4)=1$ and $\chi\left( b\left( b^{\frac{2q-1}{3}} - \alpha\right) \right) =-1$.
	Hence, we have
\begin{align*}
	8\omega_2=& \sum_{\substack{x\in \Fq \\ x\neq 0,\,\frac{1}{2},\,2}} (1+\chi(2x-1)) (1+\chi(2x-4)) \left(1-\chi\left( x\left( x^{\frac{2q-1}{3}} -\alpha\right)\right)\right) \\
	=& -2+\Sum_{x\in \Fq} (1+\chi(2x-1)) (1+\chi(2x-4)) \left(1-\chi\left( x\left( x^{\frac{2q-1}{3}} -\alpha\right)\right)\right) \\
	=&-2+D_1 -D_2 ,
\end{align*}
where
\begin{align*}
	D_1 &= \Sum_{x\in \Fq}(1+\chi(2x-1))(1+\chi(2x-4)),\\
	D_2 &= \Sum_{x\in \Fq}(1+\chi(2x-1))(1+\chi(2x-4))\chi\left( x\left( x^{\frac{2q-1}{3}} -\alpha\right)\right).
\end{align*}
By Lemma \ref{chi_lemma quad},
\begin{equation*}
	D_1 =\Sum_{x\in \Fq}(1+\chi(2x-1))(1+\chi(2x-4)) = q + \Sum_{x\in \Fq} \chi(2x-1) +\Sum_{x\in \Fq}\chi(2x-4) + \Sum_{x\in \Fq}\chi((2x-1)(2x-4)) = q-1.
\end{equation*}
Observe that
\begin{align*}
	D_2 &=\Sum_{x\in \Fq} \chi(x)\chi\left( x^{\frac{2q-1}{3}} -\alpha\right) (1+\chi(2x-1)) (1+\chi(2x-4))\\
	&\underset{x=u^3}{=}\Sum_{u\in \Fq} \chi(u^3)\chi\left( u -\alpha\right)(1+\chi(2u^3-1))(1+\chi(2u^3-4)) \\
	&=\Sum_{u\in \Fq} \chi(u)\chi\left( u -\alpha\right)(1+\chi(2u^3-1))(1+\chi(2u^3-4)) = 4\Sum_{u \in S} \chi(u)\chi\left( u -\alpha\right).
\end{align*}
where $S = \left\{ x\in \Fq : \chi(2x^3-1)=\chi(2x^3-4)=1\right\}$. Note that if $u^3=\frac{1}{2}$, then $1+\chi(2u^3-4)=1-\chi(3)=0$, and if $u^3=2$, then $u=\alpha$ and hence $\chi(u-\alpha)=0$, so these cases do not contribute to the sum. Thus, 
\begin{align*}
	\omega_2 =\frac{1}{8}\left( -2+D_1 -D_2 \right)  = \frac{1}{8}\left( q-3 -4\Sum_{x\in S}\chi(x)\chi(x-\alpha)\right).
\end{align*}
We get $\omega_0=\frac{3}{8}(q-3)-\frac{\Gamma}{2}$ and $\omega_1 = \frac{1}{2}(q+1)+\Gamma$ from \eqref{DS_identity} with the above and $\omega_{\frac{q+1}4}=1$.

To complete the proof, it remains to show that $\omega_2>0$. Since
\begin{align*}
	D_2 &= 4\Sum_{u \in S} \chi(u)\chi\left( u -\alpha\right)=\Sum_{u\in \Fq} \chi(u)\chi\left( u -\alpha\right)(1+\chi(2u^3-1))(1+\chi(2u^3-4)) \\
	&= -1+\Sigma_1+\Sigma_2+\Sigma_3,
\end{align*}
where
\begin{align*}
	\Sigma_1&=\sum_{u\in\F_q}\chi\bigl(u(u-\alpha)(2u^3-1)\bigr),\\
	\Sigma_2&=\sum_{u\in\F_q}\chi\bigl(u(u-\alpha)(2u^3-4)\bigr),\\
	\Sigma_3&=\sum_{u\in\F_q}\chi\bigl(u(u-\alpha)(2u^3-1)(2u^3-4)\bigr).
\end{align*}
Since the corresponding polynomials have degrees $5,5,$ and $8$, respectively, and are not squares, Lemma~\ref{chi_lemma inequal} gives
$$
|\Sigma_1|\le 4\sqrt q,\qquad |\Sigma_2|\le 4\sqrt q,\qquad |\Sigma_3|\le 7\sqrt q.
$$
Hence,
$$
D_2\le -1+15\sqrt q,
$$
and therefore
$$
\omega_2=\frac18(q-3-D_2)\ge \frac18(q-2-15\sqrt q).
$$ 
In particular, $\omega_2>0$ for $q>231$. 

For the remaining values $q<231$, we verify by direct computation (using SageMath) that $\omega_2>0$ (see Table~\ref{table_ds}). 
This completes the proof.
\end{proof}

Note that the above theorem is confirmed via SageMath when $11\le q<100000$. Table \ref{table_ds} describes the differential spectrum of $F_3$ and $\Gamma$ in Theorem \ref{DS_theorem r=3}, computed via SageMath, when $q < 231$.

\renewcommand{\arraystretch}{1.1}
\begin{table}
	\begin{center}
		\begin{tabular}{ccc}
			\hline $q$ & $\Gamma$ & $DS_{F_3}$ \\
			\hline $11$ & $2$ & $\{ \omega_0 = 2,\ \omega_1 = 8,\ \omega_{3} = 1 \}$ \\
			$23$ & $-3$ & $\{\omega_0 = 9,\ \omega_1 = 9,\ \omega_2 = 4,\ \omega_{6} = 1 \}$\\
			$47$ & $-5$ & $\{ \omega_0 = 19,\ \omega_1 = 19,\ \omega_2 = 8,\ \omega_{12} = 1 \}$ \\
			$59$ & $2$ & $\{\omega_0 = 20,\ \omega_1 = 32,\ \omega_2 = 6,\ \omega_{15} = 1 \}$\\
			$71$ & $5$ & $\{ \omega_0 = 23,\ \omega_1 = 41,\ \omega_2 = 6,\ \omega_{18} = 1 \}$ \\
			$83$ & $-4$ & $\{\omega_0 = 32,\ \omega_1 = 38,\ \omega_2 = 12,\ \omega_{21} = 1 \}$\\
			$107$ & $-2$ & $\{ \omega_0 = 40,\ \omega_1 = 52,\ \omega_2 = 14,\ \omega_{27} = 1 \}$ \\
			$131$ & $-4$ & $\{\omega_0 = 50,\ \omega_1 = 62,\ \omega_2 = 18,\ \omega_{33} = 1 \}$\\
			$167$ & $13$ & $\{ \omega_0 = 55,\ \omega_1 = 97,\ \omega_2 = 14,\ \omega_{42} = 1 \}$\\
			$179$ & $8$ & $\{\omega_0 = 62,\ \omega_1 = 98,\ \omega_2 = 18,\ \omega_{45} = 1 \}$\\
			$191$ & $3$ & $\{ \omega_0 = 69,\ \omega_1 = 99,\ \omega_2 = 22,\ \omega_{48} = 1 \}$ \\
			$227$ & $0$ & $\{\omega_0 = 84,\ \omega_1 = 114,\ \omega_2 = 28,\ \omega_{57} = 1 \}$\\
			\hline
		\end{tabular}
		\caption{Differential spectrum $DS_{F_3}$ when $q < 231$ with $q \equiv 11 \pmod{12}$.}\label{table_ds}
	\end{center}
\end{table}
\renewcommand{\arraystretch}{1}

\subsection{Differential Spectrum when $r=\frac{2q-1}{3}$}\label{subsec_DS_F3inv}

In this subsection, we study the differential spectrum of $F_{\frac{2q-1}{3}}$ when $q\equiv 11\pmod{12}$. We wish to compute the number of solutions of 
\begin{equation}\label{DU_eqn r=3^(-1)}
	b=F_{\frac{2q-1}{3}}(x+1)-F_{\frac{2q-1}{3}}(x) = (x+1)^{\frac{2q-1}{3}}(1+\chi(x+1))-x^{\frac{2q-1}{3}}(1+\chi(x)).
\end{equation}
We already observe that $\delta_{F_{\frac{2q-1}{3}}}(1,0)=\frac{q+1}{4}$, in Section \ref{subsec_DU_Fr}. As in Section~\ref{subsec_DS_F3}, we derive the differential spectrum using~\eqref{delta1b_eqn}. We first characterize $D_{00}(b)$, $D_{01}(b)$ and $D_{10}(b)$.

\begin{lemma}\label{DU_lemma r=3^(-1) S00}
	Let $r=\frac{2q-1}{3}$ and $q\equiv 11\pmod{12}$. Then,
	\begin{equation*}
		D_{00} (b)=
		\begin{cases}
			1 &\text{ if }\chi(b(32-b^3))=1,\ \chi(b(b^3-8))=-1,\\
			0 &\text{ otherwise.}
		\end{cases}
	\end{equation*}
\end{lemma}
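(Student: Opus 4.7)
The plan mirrors the $r=3$ case but reverses the substitution, exploiting that $r$ is the inverse of $3$ modulo $q-1$. For $x\in S_{00}$, both factors $1+\chi(x)$ and $1+\chi(x+1)$ equal $2$, so the defining equation reduces to $(x+1)^r-x^r = b/2$. Since $r=(2q-1)/3\in\mathbb{Z}$ forces $q\equiv 2\pmod 3$, we have $\gcd(3,q-1)=1$ and $3r\equiv 1\pmod{q-1}$, so the map $x\mapsto x^r$ is the cube-root bijection on $\Fq$. Setting $u=x^r$ and $v=(x+1)^r$, I would then have $u^3=x$, $v^3=x+1$, and from $\chi(u^3)=\chi(u)$ the condition $x\in S_{00}$ is equivalent to $\chi(u)=\chi(v)=1$.

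Next, I would substitute $v=u+b/2$ into $v^3-u^3=1$ to obtain
\[
12b\,u^2 + 6b^2 u + (b^3-8) = 0,
\]
with discriminant $12b(32-b^3)$. Combining the hypotheses gives $q\equiv 11\pmod{12}$, which forces $n$ odd and $\chi(3)=1$ in $\Fq$ by quadratic reciprocity, hence $\chi(12)=1$. Therefore the quadratic has two distinct roots in $\Fq$ iff $\chi(b(32-b^3))=1$. The key observation is then a symmetry: since $u_1+u_2=-b/2$, we have $v_1=u_1+b/2=-u_2$ and $v_2=-u_1$. Using $\chi(-1)=-1$ (from $q\equiv 3\pmod 4$), this yields $\chi(v_i)=-\chi(u_{3-i})$, so the joint condition $\chi(u_i)=\chi(v_i)=1$ collapses to $\chi(u_i)=1$ and $\chi(u_{3-i})=-1$.

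Consequently, exactly one of the two roots gives an $x\in S_{00}$ precisely when $\chi(u_1u_2)=-1$; if $\chi(u_1u_2)=1$ then $\chi(u_1)=\chi(u_2)$ and neither root works. Since $u_1u_2=(b^3-8)/(12b)$ and $\chi(12)=1$, this rewrites as $\chi(b(b^3-8))=-1$. I would finish by checking the two degenerate cases: when $b^3=32$ the double root $u=-b/4$ gives $v=-u$, so $\chi(u)=\chi(v)=1$ is impossible and $D_{00}(b)=0$, matching the failure of $\chi(b(32-b^3))=1$; when $b=2$ the roots $u=0$ and $u=-1$ produce $x=0$ and $x=-1$, both outside $S_{00}$, matching the failure of $\chi(b(b^3-8))=-1$. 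The principal obstacle is spotting the symmetry $v_i=-u_{3-i}$, which collapses the joint character constraint on $(u_i,v_i)$ into a single condition on $u_1u_2$; once this reduction is found, the remaining algebra and case analysis are routine.
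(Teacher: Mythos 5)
Your proof is correct and follows essentially the same route as the paper: both reduce the $S_{00}$ condition to the quadratic $12bu^2+6b^2u+b^3-8=0$ in $u=x^{\frac{2q-1}{3}}$, use the discriminant $12b(32-b^3)$ together with $\chi(12)=1$, and exploit the pairing of the two roots (your $v_i=-u_{3-i}$ is the paper's observation that the two candidate solutions satisfy $x_2=-x_1-1$) to turn the joint condition into $\chi(u_1u_2)=\chi(b(b^3-8))=-1$. Your use of Vieta's relation in place of the paper's explicit root formula is only a cosmetic difference, and your handling of the degenerate cases $b^3=32$ and $b=2$ is consistent with the stated formula.
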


\begin{proof}
		If there exists $x\in S_{00}$ satisfying \eqref{DU_eqn r=3^(-1)}, then \eqref{DU_eqn r=3^(-1)} leads to 
	\begin{equation*}
		(x+1)^{\frac{2q-1}{3}} = x^{\frac{2q-1}{3}} + \frac{b}{2}.
	\end{equation*}
	Raising both sides of the above equation to the third power yields
	\begin{equation*}
		0=12bX^2+6b^2X+b^3-8,
	\end{equation*}
	where $X = x^{\frac{2q-1}{3}}$. The above equation has two solutions if and only if 
	\begin{equation*}
		(6b^2)^2 - 4\times 12b(b^3-8) = -12b^4 +384b = 12b(32-b^3)
	\end{equation*}
	is a square. Then, we obtain
	\begin{equation*}
		X=\frac{-6b^2\pm (12b(32-b^3))^{\frac{q+1}{4}}}{24b} = \frac{-3b^2\pm \left( 3b(32-b^3)\right) ^{\frac{q+1}{4}}}{12b}
	\end{equation*}
	Then, one has
	\begin{align*}
		x&=X^3 = \frac{\left( -3b^2\pm \left( 3b(32-b^3)\right) ^{\frac{q+1}{4}}\right) ^3}{1728b^3} \\
		&=  \frac{ -27b^6\pm 27b^4\left( 3b(32-b^3)\right) ^{\frac{q+1}{4}} - 27b^3(32-b^3)\pm 3b(32-b^3)\left( 3b(32-b^3)\right) ^{\frac{q+1}{4}}}{1728b^3}\\
		&=\frac{ -864b^3\pm 24b(4+b^3)\left( 3b(32-b^3)\right) ^{\frac{q+1}{4}}}{1728b^3}
		=\frac{ -36b^2\pm (4+b^3)\left( 3b(32-b^3)\right) ^{\frac{q+1}{4}}}{72b^2}.
	\end{align*}
	Hence, we can see that if we denote $x=x_1$ is a solution of \eqref{DU_eqn r=3^(-1)}, then the other is $x=x_2=-x_1 -1$, since the two solutions satisfy $x_1+x_2=-1$. If both $x_1$ and $x_2$ are in $S_{00}$, then $\chi(x_1+1)=\chi(-x_2)=-1$ and $\chi(x_2+1)=\chi(-x_1)=-1$, a contradiction. Hence, \eqref{DU_eqn r=3^(-1)} has at most one solution in $S_{00}$. Also, one can check that
	\begin{equation*}
		x(x+1)= -\frac{(b^3-8)^3}{2^6 \cdot 3^3\cdot b^3}
	\end{equation*}
	and $\chi(3)=1$, and therefore we have $\chi(x(x+1))=1$ if and only if $\chi(b(b^3-8))=-1$.
\end{proof}

Next, we characterize all $b$ satisfying $D_{01}(b) + D_{10}(b) = 1$.
Note that we have already shown in Lemma \ref{DU_lemma 01 le1} that $D_{01} (b)\le 1$ and $D_{10} (b)\le 1$.

\begin{lemma}\label{DU_lemma r=3^(-1)}
	Let $r=\frac{2q-1}{3}$ and $q\equiv 11\pmod{12}$. Then, 
	\begin{equation*}
		D_{01}(b)+D_{10}(b) = 
		\begin{cases}
			1 &\text{ if }\chi(b(b^3-8))=-1,\\
			0 &\text{ otherwise.}
		\end{cases}
	\end{equation*}
\end{lemma}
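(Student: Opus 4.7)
The plan is to mirror the proof of Lemma~\ref{DU_lemma r=3} with the roles of cubing and cube-rooting interchanged. The key arithmetic fact that powers the whole argument is that $r=\frac{2q-1}{3}$ is the multiplicative inverse of $3$ modulo $q-1$: indeed $3r=2q-1=2(q-1)+1\equiv 1\pmod{q-1}$. Consequently, on $\Fqmul$ the map $x\mapsto x^r$ is a bijection whose inverse is cubing, so equations of the form $z^r=c$ have the unique solution $z=c^3$.

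First I would handle the case $x\in S_{01}$. Here $\chi(x+1)=-1$ and $\chi(x)=1$, so the $(x+1)$-term of \eqref{DU_eqn r=3^(-1)} vanishes and the equation collapses to $x^r=-\tfrac{b}{2}$. By the inversion remark above, the unique candidate is $x=\bigl(-\tfrac{b}{2}\bigr)^3=-\tfrac{b^3}{8}$, and I would then translate the membership conditions $\chi(x)=1$ and $\chi(x+1)=-1$ into conditions on $b$. Using $\chi(-1)=-1$ (since $q\equiv 3\pmod 4$) and $\chi(x+1)=\chi\bigl(\tfrac{8-b^3}{8}\bigr)$, a short character computation gives $D_{01}(b)=1$ iff $\chi(2b)=-1$ and $\chi\bigl(2(b^3-8)\bigr)=1$. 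By the symmetric analysis for $x\in S_{10}$ the equation becomes $(x+1)^r=\tfrac{b}{2}$, with unique candidate $x=\tfrac{b^3-8}{8}$, giving $D_{10}(b)=1$ iff $\chi(2b)=1$ and $\chi\bigl(2(b^3-8)\bigr)=-1$.

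Finally I would invoke Lemma~\ref{DU_lemma 01 zero} to conclude that $D_{01}(b)$ and $D_{10}(b)$ cannot both be nonzero, so
\[
D_{01}(b)+D_{10}(b)=1 \iff \chi(2b)\chi\bigl(2(b^3-8)\bigr)=-1 \iff \chi\bigl(b(b^3-8)\bigr)=-1,
\]
where the last equivalence uses $\chi(4)=1$. The degenerate values $b=0$ and $b^3=8$ (i.e.\ $b=2$) are already excluded by Lemma~\ref{DU_lemma 01 le1}, and they also make $\chi\bigl(b(b^3-8)\bigr)=0\ne -1$, so the stated formula is consistent at those points. The main obstacle, such as it is, is the careful bookkeeping of quadratic character signs when translating the $\Fq$-membership conditions on $x$ into conditions on $b$; this reduces to repeatedly applying $\chi(-1)=-1$ and multiplicativity of $\chi$, and does not require any nontrivial character sum estimate.
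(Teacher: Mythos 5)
Your proposal is correct and follows essentially the same route as the paper's proof: reduce the equation on $S_{01}$ (resp.\ $S_{10}$) to $x^r=-\tfrac{b}{2}$ (resp.\ $(x+1)^r=\tfrac{b}{2}$), cube to get the unique candidate, translate the membership conditions into $\chi(2b)$ and $\chi(2(b^3-8))$, and combine using the mutual exclusivity of the two cases. The only cosmetic difference is that you make the inversion $3r\equiv 1\pmod{q-1}$ and the appeal to Lemma~\ref{DU_lemma 01 zero} explicit, which the paper leaves implicit.
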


\begin{proof}
	If $x\in S_{01}$, then \eqref{DU_eqn r=3^(-1)} reduces to 
	\begin{equation}\label{DU_equation1 r=3inv}
		x^{\frac{2q-1}{3}} = -\frac{b}{2}.
	\end{equation}
	From the above equation, we get 
	$$\chi\left( \frac{b}{2}\right) = \chi\left( -x^{\frac{2q-1}{3}}\right) = -\chi(x)^{\frac{2q-1}{3}} = -1.$$  
	Raising both sides of \eqref{DU_equation1 r=3inv} to the third power yields
	$$x = -\frac{b^3}{8}.$$
	Since $x\in S_{01}$, we have
	$$1=\chi(-(x+1)) = \chi\left(  \frac{b^3-8}{8}\right) = \chi(2(b^3-8)).$$	
	Hence, $D_{01}(b)=1$ if and only if $\chi(2b) =-1$ and  $\chi(2(b^3-8))=1$. Similarly, $D_{10}(b)=1$ if and only if $\chi(2b) =1$ and  $\chi(2(b^3-8))=-1$. Thus, $\chi(2b)\chi(2(b^3-8))=-1$, 
	or equivalently $\chi(b(b^3-8))=-1$, 
	which yields the desired result.
\end{proof}

Next, we prove the main theorem of this subsection.
\begin{theorem}\label{DS_theorem r=3^(-1)}
	Let $q\equiv 11 \pmod{12}$ with $q>11$. Then, $F_{\frac{2q-1}{3}}(x) = x^{\frac{2q-1}{3}}+x^{\frac{q+1}{6}}$ is locally-APN. Furthermore, the differential spectrum of $F_{\frac{2q-1}{3}}$ is given by
	\begin{equation*}
		DS_{F_{\frac{2q-1}{3}}} = \left\{ \omega_0=\frac{q-3}{2},\ \omega_1 = \frac{q+5}{4},\ \omega_2 = \frac{q-3}{4},\ \omega_{\frac{q+1}{4}}=1 \right\}.
	\end{equation*}
\end{theorem}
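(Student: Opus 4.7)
The plan is to mirror the proof of Theorem~\ref{DS_theorem r=3}, making heavy use of the lemmas already established for $r=(2q-1)/3$. First, I would invoke Theorem~\ref{DU_theorem}: a short computation gives $\gcd((2q-1)/3,\,q-1)=1$ (using $\gcd(q-1,3)=1$, which follows from $q\equiv 2\pmod 3$), and Lemma~\ref{DU_lemma r=3^(-1) S00} supplies the $S_{00}$-uniqueness hypothesis, so $F_r$ is locally-APN with $\delta_{F_r}(1,0)=(q+1)/4$, $\omega_{(q+1)/4}=1$, and $\delta_{F_r}(1,b)\le 2$ for $b\in\Fqmul$. Comparing Lemmas~\ref{DU_lemma r=3^(-1) S00} and~\ref{DU_lemma r=3^(-1)}, the common condition $\chi(b(b^3-8))=-1$ shows that $D_{00}(b)=1$ forces $D_{01}(b)+D_{10}(b)=1$, so (noting also $D_{00}(2)=0$) the event $\delta_{F_r}(1,b)=2$ coincides with $D_{00}(b)=1$, giving
\[
\omega_2 \;=\; \#\{b\in\Fq : \chi(b(32-b^3))=1,\ \chi(b(b^3-8))=-1\}.
\]

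Next, I would express this count as a character-sum indicator,
\[
4\omega_2 \;=\; \sum_{b\in\Fq}\bigl(1+\chi(b(32-b^3))\bigr)\bigl(1-\chi(b(b^3-8))\bigr)\;-\;\Delta,
\]
where $\Delta$ accounts for the three exceptional points $b\in\{0,\,2,\,\beta\}$ at which one of the two characters vanishes; here $\beta$ is the unique cube root of~$32$ in $\Fq$, which exists because $\gcd(3,q-1)=1$. The key algebraic trick is that $\chi$ is quadratic, so $\chi(b)=\chi(b)^3=\chi(b^3)$, and since $b\mapsto b^3$ is a bijection on $\Fq$, substituting $y=b^3$ converts each nontrivial summand into a quadratic character sum in $y$. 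Lemma~\ref{chi_lemma quad} then evaluates $\sum_y\chi(y(32-y))$, $\sum_y\chi(y(y-8))$, and $\sum_{y\ne 0}\chi((32-y)(y-8))$, and combining shows that the main sum equals~$q$.

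For $\Delta$, direct substitution at the three bad points yields contributions $1$, $1+\chi(3)$, and $1-\chi(3)$; the contribution at $b=\beta$ uses $\chi(\beta)=\chi(2)$ (from $\chi(\beta)^3=\chi(\beta^3)=\chi(32)=\chi(2)$) to deduce $\chi(24\beta)=\chi(2)^2\chi(3)=\chi(3)$. The $\pm\chi(3)$ terms cancel, giving $\Delta=3$ and hence $\omega_2=(q-3)/4$. The remaining entries $\omega_1=(q+5)/4$ and $\omega_0=(q-3)/2$ then follow immediately from the two identities in~\eqref{DS_identity} together with $\omega_{(q+1)/4}=1$. The main obstacle is the bookkeeping of these three bad points, and the slightly delicate cancellation that makes the final count independent of both $\chi(2)$ and $\chi(3)$; that the degree-$6$ polynomial behind Lemma~\ref{DU_lemma r=3^(-1) S00} reduces so cleanly is what enables such a simple closed-form spectrum.
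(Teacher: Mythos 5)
Your proposal is correct and follows essentially the same route as the paper: reduce $\omega_2$ to counting $b$ with $\chi(b(32-b^3))=1$ and $\chi(b(b^3-8))=-1$ via Lemmas~\ref{DU_lemma r=3^(-1) S00} and~\ref{DU_lemma r=3^(-1)}, substitute $y=b^3$ using $\chi(b)=\chi(b^3)$ and bijectivity of cubing, evaluate the resulting quadratic character sums by Lemma~\ref{chi_lemma quad}, and finish with~\eqref{DS_identity}. Your explicit bookkeeping of the three exceptional points $b\in\{0,2,\beta\}$ and the cancellation of the $\pm\chi(3)$ terms is in fact more detailed than the paper, which simply subtracts $3$ without comment.
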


\begin{proof}
	Combining \eqref{delta1b_eqn} with Lemma \ref{DU_lemma r=3^(-1) S00} and Lemma \ref{DU_lemma r=3^(-1)}, we obtain that $\delta(1,b)=2$ if and only if $\chi(b(32-b^3))=1$ and $\chi(b(b^3-8))=-1$. Hence, we have	
	\begin{align*}
		4\omega_2&=\Sum_{\substack{x\in \Fqmul \\ x\neq 2,\,2\alpha^2}}\left( 1+\chi(x(32-x^3))\right)  \left( 1-\chi(x(x^3-8))\right)  \\
		&=\Sum_{x\in \Fqmul}\left( 1+\chi(x(32-x^3))\right)  \left( 1-\chi(x(x^3-8))\right)-2.
	\end{align*}
	Applying $u=x^3$ and since $\frac{2q-1}{3}$ is odd,
	\begin{align*}
		4\omega_2&= \Sum_{u\in \Fqmul}\left( 1+\chi(u^{\frac{2q-1}{3}}(32-u))\right)  \left( 1-\chi(u^{\frac{2q-1}{3}}(u-8))\right)-2\\
		&=\Sum_{u\in \Fq}\left( 1+\chi(u(32-u))\right)  \left( 1-\chi(u(u-8))\right)-3.
	\end{align*}
	By Lemma \ref{chi_lemma quad}, 
	\begin{align*}
		4\omega_2&=\Sum_{u\in \Fq} 1 + \Sum_{u\in \Fq} \chi(u(32-u))- \Sum_{u\in \Fq}\chi(u(u-8)) - \Sum_{u\in \Fq}\chi(u^2(u-8)(32-u)) -3 \\
		&=q-1  - \Sum_{u\in \Fqmul}\chi(u^2(u-8)(32-u)) = q-1  - \Sum_{u\in \Fqmul}\chi((u-8)(32-u))\\
		&=q-2 - \Sum_{u\in \Fq}\chi((u-8)(32-u))=q-3.
	\end{align*}
	Hence, we obtain $\omega_2=\frac{q-3}{4}$. Applying $\omega_{\frac{q+1}{4}}=1$ and $\omega_2=\frac{q-3}{4}$ on \eqref{DS_identity}, we have $\omega_1=\frac{q+5}{4}$ and $\omega_0 = \frac{q-3}{2}$. We complete the proof.
\end{proof}

We note that the differential spectrum in this case coincides with that of 
$F_{\frac{q+1}{4}}$ when $q\equiv7\pmod{8}$ in~\cite{KK26}, 
although the two cases arise under different arithmetic conditions on~$q$. 
This theorem has also been verified numerically using SageMath for all $q$ with $11\le q<100000$.

\section{Boomerang Properties}\label{sec_boomerang}

In this subsection, we study boomerang properties of $F_r$. We consider the problem of finding the number of common solutions $(x,y)$ of the following system.
\begin{equation}\label{BU_system}
	\begin{cases}
		x^r(1+\chi(x))-y^r(1+\chi(y))=b,\\
		(x+1)^r(1+\chi(x+1))-(y+1)^r(1+\chi(y+1))=b.
	\end{cases}
\end{equation}
Denote by $B_{ijkl}(b)$ the number of solutions of \eqref{BU_system} in $S_{ij}\times S_{kl}$, where $i,j,k,l\in\{0,1\}$.

\subsection{Boomerang Uniformity of $F_r$}\label{subsec_BU_Fr}

In this subsection, we prove the following.
\begin{theorem}\label{BU_theorem}
	If $D_{00}(b)\le 1$ for all $b\in \Fqmul$ and $\gcd(r,q-1) \mid 2$, then $\beta_{F_r}\le 2$. In particular, for each parameter set $(p,n,r)$ satisfying the corresponding conditions in Table~\ref{table_r}, $\beta_{F_r}\le 2$.
\end{theorem}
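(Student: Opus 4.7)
The plan is to reduce to $a=1$ using Lemma~\ref{Fruproperty2_lemma} and then carry out a direct case analysis of pairs $(x,y)\in\Fq^2$ solving the boomerang system, according to the partition $\Fq = \{0\}\cup\{-1\}\cup S_{00}\cup S_{01}\cup S_{10}\cup S_{11}$. The key simplifications are $F_r(z)=2z^r$ when $\chi(z)=1$, $F_r(z)=0$ when $\chi(z)=-1$, and $F_r(0)=F_r(-1)=0$, the latter because $q\equiv 3\pmod 4$ forces $\chi(-1)=-1$. In every subcase the system reduces to polynomial conditions on $x^r$, $(x+1)^r$, $y^r$, $(y+1)^r$.

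Most of the sixteen $B_{ijkl}$ vanish for $b\in\Fqmul$. Whenever $\chi(x)=\chi(y)=-1$ or $\chi(x+1)=\chi(y+1)=-1$, one of the two boomerang equations becomes $0=b$ and kills the case. The cases $B_{0011}$ and $B_{1100}$ reduce to $z^r=(z+1)^r$, which has no admissible solution under $\gcd(r,q-1)\mid 2$ by the argument inside Lemma~\ref{DU_lemma 00}; and $B_{0000}$ collapses via the hypothesis $D_{00}(b)\le 1$ applied to $c=(x+1)^r-x^r=(y+1)^r-y^r$, forcing $x=y$ and hence $b=0$. The subtler cases $B_{0110}$ and $B_{1001}$ force $\chi(b/2)=\chi(-b/2)$, i.e.\ $\chi(-1)=1$, which is impossible when $q\equiv 3\pmod 4$; these vanish by the mechanism of Lemma~\ref{DU_lemma 01 zero}. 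The boundary configurations with $x$ or $y$ equal to $-1$ similarly reduce to $z^r=(z+1)^r$ or to $0=b$ and contribute nothing.

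What survives are at most six quantities: $B_{0001}$, $B_{0010}$, $B_{0100}$, $B_{1000}$, together with two auxiliary contributions from pairs of the form $(x,0)$ with $x\in S_{00}$ and $(0,y)$ with $y\in S_{00}$. Using $\gcd(r,q-1)\mid 2$ and $\chi(-1)=-1$, the $r$-th root with prescribed quadratic character is unique, so each of these six counts is at most~$1$. They split according to $\chi(b/2)=\pm 1$: the branch $\chi(b/2)=1$ contains exactly $B_{0001}$, $B_{0010}$ and the $(x,0)$-case, while $\chi(b/2)=-1$ contains exactly $B_{0100}$, $B_{1000}$ and the $(0,y)$-case. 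In the first branch, both $B_{0010}$ and the $(x,0)$-case use the same candidate $x\in S_{00}$, namely the unique one with $x^r=b/2$; if the $(x,0)$-case is non-zero then $(x+1)^r-x^r=1$, forcing any would-be $y\in S_{10}$ for $B_{0010}$ to satisfy $(y+1)^r=1$ with $\chi(y+1)=1$, hence $y=0$, which does not lie in $S_{10}$. So these two contributions are mutually exclusive, giving $\beta_{F_r}(1,b)\le B_{0001}+1\le 2$. The branch $\chi(b/2)=-1$ is treated symmetrically.

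The main obstacle is bookkeeping: sixteen $B_{ijkl}$ plus the boundary cases where $x$ or $y$ lies in $\{0,-1\}$ must each be analysed, and these boundary contributions are easy to overlook because $0$ and $-1$ do not belong to any $S_{ij}$. Once they are tabulated, the $\chi(-1)=-1$ collapse and the uniqueness-plus-mutual-exclusion argument close the bound at~$2$. Therefore $\beta_{F_r}(1,b)\le 2$ for every $b\in\Fqmul$, and Lemma~\ref{Fruproperty2_lemma} gives $\beta_{F_r}\le 2$. For each parameter set $(p,n,r)$ in Table~\ref{table_r}, the hypothesis $D_{00}(b)\le 1$ is exactly what is established inside the proof of Theorem~\ref{DU_theorem}, so $\beta_{F_r}\le 2$ in all those cases.
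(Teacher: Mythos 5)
Your proposal is correct and follows essentially the same route as the paper: the same case analysis of $B_{ijkl}$ over $S_{ij}\times S_{kl}$, the same vanishing arguments (the $0=b$ and $z^r=(z+1)^r$ collapses, $B_{0000}=0$ from $D_{00}(b)\le 1$, and the $\chi(-1)=-1$ obstruction for $B_{0110},B_{1001}$), the same uniqueness of $r$-th roots with prescribed character, and the same dichotomy on $\chi\!\left(\frac{b}{2}\right)$ that the paper packages as Lemma~\ref{BU_lemma1}. The only difference is organizational: you fold the $x=0$ and $y=0$ boundary solutions into the two $\chi\!\left(\frac{b}{2}\right)$ branches and prove mutual exclusion with $B_{0010}$ (resp.\ $B_{1000}$), whereas the paper treats the $x=0$ case first and then shows the relevant $B_{ijkl}$ vanish in its presence — the underlying computations are identical.
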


For the exponents $r$ listed in Table~\ref{table_r}, we have already shown in Section~\ref{sec_Diff} that $D_{00}(b)\le 1$ for all $b\in \Fqmul$. 
Therefore, it suffices to prove the first assertion of the above theorem.
To prove the above theorem, we investigate each $B_{ijkl}(b)$.

\begin{lemma}\label{BU_lemma S11}
	Let $q\equiv 3\pmod{4}$ and $b\in \Fqmul$. If $\gcd(r,q-1)\in \{1,2\}$, then there is no solution $(x,y)$ of \eqref{BU_system} satisfying $x \in  (S_{11}\cup\{-1\})$ or $y \in  (S_{11}\cup\{-1\})$.
\end{lemma}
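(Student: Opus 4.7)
The plan is to exploit the fact that the factor $1+\chi(\cdot)$ vanishes precisely on the set of non-squares, so that if $x\in S_{11}\cup\{-1\}$ then both terms in $x$ drop out of the system, reducing \eqref{BU_system} to a pair of equations in $y$ alone, which will then be shown to be inconsistent.

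First, I would verify the vanishing claim. If $x\in S_{11}$ then $\chi(x)=\chi(x+1)=-1$, so $1+\chi(x)=1+\chi(x+1)=0$ and both $x^r(1+\chi(x))$ and $(x+1)^r(1+\chi(x+1))$ are zero. If $x=-1$ then $(x+1)^r=0$ directly, and moreover $\chi(x)=\chi(-1)=-1$ (using $q\equiv 3\pmod 4$), so $x^r(1+\chi(x))=0$ as well. In either case, \eqref{BU_system} becomes
\begin{equation*}
y^r(1+\chi(y))=-b,\qquad (y+1)^r(1+\chi(y+1))=-b.
\end{equation*}

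Next I would extract structural information about $y$. Because $b\neq 0$, neither right-hand side vanishes, so $1+\chi(y)\neq 0$ and $1+\chi(y+1)\neq 0$; that is, $\chi(y)=\chi(y+1)=1$ and $y\notin\{0,-1\}$. Both equations then simplify to $2y^r=-b$ and $2(y+1)^r=-b$, forcing the key identity
\begin{equation*}
y^r=(y+1)^r.
\end{equation*}
Now I would invoke the hypothesis $\gcd(r,q-1)\mid 2$. If $\gcd(r,q-1)=1$, then $z\mapsto z^r$ is a bijection on $\Fq$, so $y=y+1$, a contradiction. If $\gcd(r,q-1)=2$, then $(y/(y+1))^r=1$ means $y/(y+1)$ is a square root of unity, giving $y/(y+1)=\pm 1$; the $+$ case is again impossible, and the $-$ case forces $y=-\tfrac12$. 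But then $\chi(y+1)=\chi(\tfrac12)$ while $\chi(y)=\chi(-1)\chi(\tfrac12)=-\chi(\tfrac12)$, so $\chi(y)$ and $\chi(y+1)$ have opposite signs, contradicting $\chi(y)=\chi(y+1)=1$.

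Finally, the symmetric case $y\in S_{11}\cup\{-1\}$ is handled identically: the $y$-terms in \eqref{BU_system} vanish, and the same analysis applied to $x$ (with $b$ in place of $-b$) yields $x^r=(x+1)^r$ together with $\chi(x)=\chi(x+1)=1$, which is ruled out by the same gcd argument. I expect no real obstacle here; the only subtlety is bookkeeping the two sources of vanishing ($S_{11}$ via $\chi$, and $x=-1$ via $(x+1)^r=0$) and using $\chi(-1)=-1$ at the right moment. The proof is quite short and requires only the hypothesis $\gcd(r,q-1)\mid 2$, not anything about $D_{00}(b)$.
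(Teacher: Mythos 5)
Your proposal is correct and follows essentially the same route as the paper: observe that $x\in S_{11}\cup\{-1\}$ kills both $x$-terms, use $b\neq 0$ to force $y\in S_{00}$, reduce to $y^r=(y+1)^r$, and rule this out via $\gcd(r,q-1)\mid 2$ together with $\chi(-\tfrac12)\neq\chi(\tfrac12)$. The only cosmetic difference is that the paper phrases the second step as a case split on $y\in S_{00}$ versus $y\notin S_{00}$, which is logically identical to your observation that a vanishing factor would force $b=0$.
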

\begin{proof}
	By symmetry, we only show that there is no solution $(x,y)$ satisfying $x \in  (S_{11}\cup\{-1\})$. Suppose that there is such a solution $(x,y)$ satisfying $x \in  (S_{11}\cup\{-1\})$. If $y\in S_{00}$, then \eqref{BU_system} reduces to $y^r = (y+1)^r=-\frac{b}{2}$. This implies that $y^{\gcd(r,q-1)}=(y+1)^{\gcd(r,q-1)}$. If $\gcd(r,q-1)=1$, then we have $y=y+1$, a contradiction. If $\gcd(r,q-1)=2$, then we obtain $y^2 = (y+1)^2$, which implies $y=-\frac{1}{2}$. Then, we get $\chi(y) = \chi\left( -\frac 1 2 \right) = \chi \left( \frac 1 2 \right) = \chi(y+1) =1$, a contradiction.
	
	If $y\not \in S_{00}$, then at least one of equations in \eqref{BU_system} leads to $b=0$, a contradiction. 
\end{proof}

\begin{lemma}\label{BU_lemma nosol}
	Let $q\equiv 3 \pmod{4}$ and $b\in \Fqmul$. Then, $B_{0101}(b)=B_{0110}(b)=B_{1010}(b)=B_{1001}(b)=0$. 
\end{lemma}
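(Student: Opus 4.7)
The plan is to dispatch the four cases directly by substituting the sign patterns prescribed by the $S_{ij}$ labels into the system \eqref{BU_system}, using $b \ne 0$ together with $\chi(-1) = -1$ (which holds since $q \equiv 3 \pmod 4$). The key observation is that membership in $S_{ij}$ forces some of the factors $1+\chi(x)$, $1+\chi(x+1)$, $1+\chi(y)$, $1+\chi(y+1)$ to vanish, dramatically simplifying \eqref{BU_system}.

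First, I would handle the ``symmetric'' cases $B_{0101}$ and $B_{1010}$. If $(x,y) \in S_{01} \times S_{01}$, then $\chi(x+1) = \chi(y+1) = -1$, so the factors $1+\chi(x+1)$ and $1+\chi(y+1)$ in the second equation of \eqref{BU_system} both vanish, reducing that equation to $0 = b$, contradicting $b\ne 0$. Symmetrically, if $(x,y) \in S_{10} \times S_{10}$, then $1+\chi(x) = 1+\chi(y) = 0$ annihilates the first equation, again forcing $b = 0$.

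Next, I would treat the ``mixed'' cases $B_{0110}$ and $B_{1001}$. Consider $(x,y) \in S_{01} \times S_{10}$: then $\chi(x+1) = \chi(y) = -1$, so \eqref{BU_system} collapses to
\[
2x^r = b, \qquad -2(y+1)^r = b.
\]
Applying $\chi$ to the first relation and using $\chi(x) = 1$ gives $\chi(b/2) = 1$; applying $\chi$ to the second and using $\chi(y+1) = 1$ gives $\chi(-b/2) = 1$. But $\chi(-b/2) = \chi(-1)\chi(b/2) = -\chi(b/2)$ since $q\equiv 3\pmod 4$, yielding $-1 = 1$, a contradiction. The case $B_{1001}$ is entirely symmetric, with the roles of $x$ and $y$ swapped: the vanishing pattern leaves $(x+1)^r = b/2$ and $y^r = -b/2$, and the same $\chi$ computation collapses it.

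I do not anticipate any real obstacle; the argument is pure bookkeeping once the sign patterns are tabulated. The only subtlety to flag is that the hypothesis $q \equiv 3 \pmod 4$ is used exclusively in the mixed cases through $\chi(-1) = -1$, whereas the symmetric cases rely only on $b \ne 0$.
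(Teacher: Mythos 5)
Your proof is correct: in each of the four cases the prescribed sign pattern kills exactly the right factors of \eqref{BU_system}, the symmetric cases $B_{0101},B_{1010}$ contradict $b\neq 0$ outright, and the mixed cases $B_{0110},B_{1001}$ reduce to $\chi(b/2)=\chi(-b/2)=1$, which is impossible since $\chi(-1)=-1$ for $q\equiv 3\pmod 4$. The paper omits this proof, deferring to Lemma 3.8 of \cite{KK25}, and your direct case-by-case substitution is precisely the standard argument intended there.
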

\begin{proof}
	The proof is very similar with Lemma 3.8 of our previous paper \cite{KK26}, and we omit here. 
\end{proof}

\begin{lemma}\label{BU_lemma 0000}
	Let $b\in \Fqmul$ and assume that $\gcd(r,q-1)\in\{1,2\}$. 
	If $D_{00}(c)\le 1$ for all $c\in \Fqmul$, then $B_{0000}(b)=0$.
\end{lemma}
\begin{proof}
	If $(x,y)\in S_{00}\times S_{00}$, then \eqref{BU_system} leads to 
	\begin{equation*}
		\begin{cases}
			x^r - y^r = \frac{b}{2},\\
			(x+1)^r - (y+1)^r = \frac{b}{2}. 
		\end{cases}
	\end{equation*}
	Then we have $x^r -y^r = \frac{b}{2} = (x+1)^r - (y+1)^r$, which implies $(x+1)^r - x^r = (y+1)^r - y^r=c$ for some $c\in \Fq$. By Lemma \ref{DU_lemma 00}, $D_{00}(0)=0$ and hence $c\ne 0$. Since $D_{00}(c)\le 1$ for all $c\in \Fqmul$, we have $x=y$, and hence $b=0$, a contradiction.
\end{proof}

By applying Lemma~\ref{BU_lemma 0000} to the results of Theorem~\ref{DU_theorem} and Lemma~\ref{DU_lemma r=3 S00}, 
we obtain that $B_{0000}(b)=0$ for every nonzero $b\in\Fq$ 
for all $r$ listed in Table~\ref{table_r}. 

\begin{lemma}\label{BU_lemma1}
	Let $q\equiv 3\pmod{4}$ and $b\in \Fqmul$. Then, $B_{0100}(b) = B_{1000}(b) =0$ or  $B_{0001}(b) = B_{0010}(b) = 0$.
\end{lemma}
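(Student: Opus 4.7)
The approach is to observe that each of the four quantities $B_{0100}(b)$, $B_{1000}(b)$, $B_{0001}(b)$, $B_{0010}(b)$ is governed by a degenerate instance of the system \eqref{BU_system} in which one of the factors $1+\chi(\cdot)$ vanishes. Tracking which factor vanishes in each case will produce a constraint on $\chi(2b)$, and the two pairs will turn out to force incompatible signs.

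First I would note that for $q\equiv 3\pmod 4$ one has $\chi(-1)=-1$ and $\chi(b/2)=\chi(2b)$ for every $b\in\Fqmul$. For the pair $(x,y)\in S_{01}\times S_{00}$, the factor $1+\chi(x+1)=0$ collapses the second equation of \eqref{BU_system} to $(y+1)^r=-b/2$; since $\chi(y+1)=1$, applying $\chi$ to both sides yields $\chi(-b/2)=1$, i.e., $\chi(2b)=-1$. An entirely analogous step for $(x,y)\in S_{10}\times S_{00}$ isolates $y^r=-b/2$ from the first equation and, via $\chi(y)=1$, forces the same constraint $\chi(2b)=-1$. Hence if either $B_{0100}(b)$ or $B_{1000}(b)$ is nonzero, then $\chi(2b)=-1$.

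The symmetric step handles the other two cases. For $(x,y)\in S_{00}\times S_{01}$ the vanishing factor $1+\chi(y+1)=0$ isolates $(x+1)^r=b/2$, while for $(x,y)\in S_{00}\times S_{10}$ the factor $1+\chi(y)=0$ isolates $x^r=b/2$; in both cases the base of the power has $\chi$-value $1$, so $\chi(b/2)=1$, i.e., $\chi(2b)=1$. Thus, whenever either $B_{0001}(b)$ or $B_{0010}(b)$ is nonzero, $\chi(2b)=1$.

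Since $b\in\Fqmul$ implies $\chi(2b)\in\{\pm 1\}$ takes exactly one value, the constraints $\chi(2b)=-1$ and $\chi(2b)=1$ cannot coexist for the same $b$, and the dichotomy claimed in the lemma follows immediately. There is no substantive obstacle; the whole argument reduces to carefully tracking which of the four characters $\chi(x),\chi(x+1),\chi(y),\chi(y+1)$ equals $-1$ in each case, and then reading off the resulting sign constraint on $\chi(2b)$.
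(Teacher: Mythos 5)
Your proof is correct and follows essentially the same route as the paper: in each of the four cases the vanishing factor $1+\chi(\cdot)$ isolates a single power, whose known character value forces $\chi\!\left(\frac{b}{2}\right)=-1$ for $B_{0100},B_{1000}$ and $\chi\!\left(\frac{b}{2}\right)=1$ for $B_{0001},B_{0010}$, and the two signs are incompatible. The only cosmetic difference is that you state all four sign constraints up front while the paper argues by contradiction case by case.
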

\begin{proof}
	Assume that $B_{0100}(b)>0$. Then, \eqref{BU_system} reduces to
	\begin{equation*}
		\begin{cases}
			x^r-y^r = \frac{b}{2},\\
			-(y+1)^r = \frac{b}{2}.
		\end{cases}
	\end{equation*} 
	From the second equation, we have $\chi\left( \frac{b}{2}\right)=-1$.
	
	Suppose that $B_{0001}(b)>0$. Then, the second equation of \eqref{BU_system} reduces to $(x+1)^r = \frac{b}{2}$, and hence there is $x\in S_{00}$ such that $(x+1)^r = \frac{b}{2}$. Then we have $\chi\left( (x+1)^r \right) = 1 = -1 = \chi\left( \frac{b}{2}\right)$, a contradiction. Thus, $B_{0001}(b)=0$. Similarly, if $B_{0010}(b)>0$, then the first equation of \eqref{BU_system} reduces to $x^r = \frac{b}{2}$. Then, we obtain $\chi\left( x^r \right) = 1 = -1 = \chi\left( \frac{b}{2}\right)$, a contradiction, and hence $B_{0010}(b)=0$. 
	
	Similarly, if $B_{1000}(b) >0$, then we have $\chi\left( \frac{b}{2}\right)= \chi(-y^r)=-1$, and hence $B_{0001}(b) = B_{0010}(b) = 0$.
	
	By symmetry, if $B_{0001}(b)>0$ or $B_{0010}(b) >0$, then $B_{0100}(b) = B_{1000}(b) =0$.
\end{proof}

\begin{lemma}\label{BU_lemma2}
	Let $q\equiv 3\pmod{4}$ and $b\in \Fqmul$. If $\gcd(r,q-1)\in \{1,2\}$, then $B_{0001}(b) \le 1$, $B_{0010}(b) \le 1$, $B_{0100}(b)\le 1$, and $B_{1000}(b) \le 1$.
\end{lemma}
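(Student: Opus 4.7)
The plan is to exploit the triangular structure that the system~\eqref{BU_system} acquires when $(x,y)$ is restricted to each of the four relevant product sets. For each case, namely $S_{00}\times S_{01}$, $S_{00}\times S_{10}$, $S_{01}\times S_{00}$, and $S_{10}\times S_{00}$, exactly one of the four factors $1+\chi(x)$, $1+\chi(x+1)$, $1+\chi(y)$, $1+\chi(y+1)$ vanishes, which collapses one of the two equations of~\eqref{BU_system} into a single-variable equation of pure power type: $(x+1)^r=b/2$ for $B_{0001}$, $x^r=b/2$ for $B_{0010}$, $(y+1)^r=-b/2$ for $B_{0100}$, and $y^r=-b/2$ for $B_{1000}$. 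The other equation then determines the remaining variable via a second pure power equation.

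The key supporting fact I would establish, once and for all, is the following: under $\gcd(r,q-1)\in\{1,2\}$ and $q\equiv 3\pmod 4$, for any $c\in\Fqmul$ and any prescribed sign $\epsilon\in\{-1,1\}$, the equation $z^r=c$ admits at most one solution $z$ with $\chi(z)=\epsilon$. When $\gcd(r,q-1)=1$ the map $z\mapsto z^r$ is already a bijection on $\Fq$. When $\gcd(r,q-1)=2$ the integer $r$ is even (since $q-1$ is even), so the preimage of $c$ is a pair $\{z_0,-z_0\}$; since $q\equiv 3\pmod 4$ forces $\chi(-1)=-1$, one has $\chi(-z_0)=-\chi(z_0)$, and exactly one preimage carries the prescribed character.

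Applying this auxiliary fact to each case closes the argument. For $B_{0001}$, the reduced equation $(x+1)^r=b/2$ together with $\chi(x+1)=1$ pins down $x$ uniquely, after which the first equation yields $y^r=x^r-b/2$, and $\chi(y)=1$ then determines $y$ uniquely (or forces no solution if the right-hand side vanishes, which only improves the bound). The other three cases follow by the same triangular argument with the roles of $x$ and $y$, and of $b/2$ and $-b/2$, interchanged in the obvious way; by symmetry one may even treat $B_{0001}\leftrightarrow B_{1000}$ and $B_{0010}\leftrightarrow B_{0100}$ by swapping $(x,y)$ and negating $b$.

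There is no genuine mathematical obstacle here; the argument is essentially bookkeeping. The main point requiring care is correctly identifying, in each of the four product sets, which of the four factors $1+\chi(\cdot)$ vanishes and hence which variable is decoupled first. Once the triangular structure is recorded, the auxiliary fact based on $\chi(-1)=-1$ closes each case in a single line, and the possibility that the second power equation has a vanishing right-hand side merely eliminates solutions rather than creating them, so the bound $B_{ijkl}(b)\le 1$ is preserved unconditionally.
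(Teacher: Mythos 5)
Your proposal is correct and follows essentially the same route as the paper: reduce each case to a triangular system in which one equation becomes a pure power equation, and use that when $\gcd(r,q-1)=2$ the two preimages under $z\mapsto z^r$ differ by a sign and hence (since $\chi(-1)=-1$ for $q\equiv 3\pmod 4$) have opposite quadratic characters, so at most one satisfies the required sign condition. The only cosmetic difference is that you isolate this sign argument as a reusable auxiliary fact, whereas the paper inlines it for the case $B_{0001}$ and appeals to symmetry for the rest.
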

\begin{proof}
We only give the proof of $B_{0001}(b)\le 1$, because proofs of the other cases are similar. If $(x,y)\in S_{00}\times S_{01}$, then equations in \eqref{BU_system} are reduced to 
\begin{equation}\label{BU_system S0001}
	\begin{cases}
		x^{r}-y^{r} = \frac{b}{2},\\
		(x+1)^{r} = \frac{b}{2}.
	\end{cases}
\end{equation}
The second equation of \eqref{BU_system S0001} has $\gcd(r,q-1)$ solutions. 
If $\gcd(r,q-1)=1$, then there exists a unique $x\in\Fq$ satisfying it. 
If $\gcd(r,q-1)=2$, then $r$ is even and hence $x^r=(-x)^r$ for all $x\in\Fq$. 
Thus, if $x=\theta$ is a solution of $(x+1)^{r}=\frac{b}{2}$, the other solution is $x=-\theta-2$. 
If both $\theta$ and $-\theta-2$ lie in $S_{00}$, then 
$1=\chi(\theta+1)=\chi(-\theta-1)=-\chi(\theta+1)$, a contradiction. 
Therefore, the second equation of \eqref{BU_system S0001} admits at most one solution in $S_{00}$ in either case.
For such $x$, the first equation of \eqref{BU_system S0001} has at most one solution $y$ with $\chi(y)=1$ in the same manner. 
Hence, $B_{0001}(b)\le1$, which completes the proof.
\end{proof}

Now, we are ready to prove the main theorem of this subsection.

\begin{proof}[Proof of Theorem \ref{BU_theorem}]
	It remains to consider the case where \eqref{BU_system} has a solution with $x=0$ or $y=0$, 
	when $D_{00}(b)\le 1$ for all $b\in \Fqmul$. We only consider the case where \eqref{BU_system} has a solution with $x=0$, 
	since the case $y=0$ can be proved similarly by symmetry. In this case, \eqref{BU_system} leads to 
	\begin{equation}\label{BU_system x=0}
		-(1+\chi(y))y^r = 2-(1+\chi(y+1))(y+1)^r = b.
	\end{equation}
	If \eqref{BU_system x=0} has a solution $y\in \Fq$ with $\chi(y)=-1$ or $y=0$, then we have $b=0$, a contradiction. If \eqref{BU_system x=0} has a solution $y\in S_{00}$, then we have $-2y^r = 2 -2(y+1)^r= b$, so $y\in S_{00}$ is a solution of $(y+1)^r - y^r=1$. Since we already see that $D_{00}(1)\le 1$, there exists at most one $y\in S_{00}$ satisfying $-2y^r = 2 -2(y+1)^r= b$. Let $y=\theta\in S_{00}$ be the solution if it exists. Note that $\chi\left( \frac{b}{2}\right) = \chi(-\theta^r) =-1$. By Lemmas \ref{BU_lemma S11}, \ref{BU_lemma nosol} and \ref{BU_lemma 0000}, 
	\begin{equation}\label{delta1b_inequal x=0}
		\beta_{F_r}(1,b)\le 1 + B_{0001}(b) + B_{0010}(b) + B_{0100}(b) + B_{1000}(b).
	\end{equation}
	If there is another solution of \eqref{BU_system} with $x\in S_{00}$, then we have $x^r = \frac{b}{2}$ or $(x+1)^r=\frac{b}{2}$. If $\gcd(r,q-1)=1$, then we have $x=-\theta$ or $x=-\theta-1$, respectively. But, we observe that $-\theta, -\theta-1\not \in S_{00}$, because $\chi(-\theta)=-1$. If $\gcd(r,q-1)=2$, there is no such $x\in \Fq$ because $r$ is even and $\chi\left( \frac{b}{2}\right)=-1$. Hence, $B_{0001}(b) = B_{0010}(b)=0$. 
	If there is another solution of \eqref{BU_system} in $S_{10}\times S_{00}$, then $y^r=-\frac{b}{2}$. If $\gcd(r,q-1)=1$, then $y=\theta$. Then, $(x+1)^r = (\theta+1)^r +\frac{b}{2} = 1$, and hence $x=0$, a contradiction. If $\gcd(r,q-1)=2$, then two solutions of $y^r=-\frac{b}{2}$ is $y=\pm \theta$, and hence $y=\theta$ since $\chi(-\theta)=-1$. Then, we have $(x+1)^r=1$ which implies that $(x+1)^2=1$. Since $x\ne0$, we obtain $x=-2$. Then, $\chi(x+1)=\chi(-1)=-1$, which is a contradiction to $x\in S_{10}$, and hence $B_{1000}(b)=0$. By Lemma \ref{BU_lemma2} with \eqref{delta1b_inequal x=0}, we obtain $\beta_{F_r}(1,b)\le 2$.
	
	If \eqref{BU_system x=0} has a solution $y\in S_{01}$, then we have $-2y^r = 2 = b$, which implies $y^r=-1$ and hence $r$ is odd. Then we obtain $\gcd(r,q-1)=1$ and $y=-1$, which contradicts that $y\in S_{01}$.
	
	If there is no solution of \eqref{BU_system} with $x=0$ or $y=0$, then by Lemmas \ref{BU_lemma S11}, \ref{BU_lemma nosol} and \ref{BU_lemma 0000}, 
	\begin{equation*}
		\beta_{F_r}(1,b) = B_{0001}(b) + B_{0010}(b) + B_{0100}(b) + B_{1000}(b).
	\end{equation*} 
	Combining Lemma \ref{BU_lemma1} and Lemma \ref{BU_lemma2} with the above equation, we have $\beta_{F_r}(1,b)\le 2$, which completes the proof.
\end{proof}

\subsection{Boomerang Spectrum of $F_2$ when $p=3$}\label{subsec_BS_F2}

In the above subsection, we showed that the boomerang uniformity of $F_r$ is at most~2 for several values of $r$, 
especially when $r = \frac{3^k + 1}{2}$, that is, generalized Coulter-Matthews exponent. 
Next, we consider the special case $k = 1$, that is $r = 2$, and prove that the boomerang uniformity of $F_r$ is actually~$1$ in this case. Moreover, we also provide the full boomerang spectrum of $F_2$.

\begin{remark}\label{MW25_remark}
	The boomerang uniformity of $F_2$ was studied in~\cite{MW25}. 
	Theorem 8 of \cite{MW25} proved that the boomerang uniformity of $F_2$ equals $2$ for $q\ge 9613^2$, and conjectured that this holds for $q>307$. We remark that the conclusion $\beta_{F_2} = 2$ in \cite{MW25} does not hold when $p=3$ and $n$ is odd. Indeed, $\beta_{F_2}=1$ consistently holds in this case, as established in our Theorem \ref{BS_theorem r=2}. 
	The issue is that, in characteristic $3$, the identity $1=-2$ makes certain cases considered in \cite{MW25} incompatible, although they are treated as independent (e.g., conditions arising in Corollary~4).
	This phenomenon is specific to characteristic $3$, where $1=-2$, and does not occur for odd primes $p > 3$.
\end{remark}

By Lemma 18 of \cite{MW25}, \eqref{BU_system} with $r=2$ has no solution with $x=0$ or $y=0$. Hence, as observed in the above subsection, 
$\beta_{F_2}(1,b) = B_{0001}(b) + B_{0010}(b) + B_{0100}(b) + B_{1000}(b)$ holds, 
and that $B_{0001}(b), B_{0010}(b), B_{0100}(b), B_{1000}(b) \le 1$. We next characterize the conditions under which each of these quantities equals~1  
and show that no two of these conditions can hold simultaneously.

\begin{lemma}\label{BS_lemma}
	Let $p=3$, $n$ be odd, $r=2$, and $b\in \Fqmul$. Then, 
	\begin{itemize}
		\item $B_{0001}(b)=1$ if and only if 
		\begin{equation}\label{BU_eqn r=2 0001}
			\chi(b)=\chi\left(b^{\frac{q+1}{4}} + 1 \right) =\chi\left(b^{\frac{q+1}{4}} - 1 \right) = \chi\left( \left( -b^{\frac{q+1}{4}} + 1\right) ^{\frac{q+1}{4}}+1\right) =-1.
		\end{equation}
		\item $B_{0010}(b)=1$ if and only if 
		\begin{equation}\label{BU_eqn r=2 0010}
			\chi(b)
			= \chi\!\left(b^{\frac{q+1}{4}} - 1\right)
			= \chi\!\left(\left(b^{\frac{q+1}{4}} + 1\right)^{\frac{q+1}{4}} - 1\right)
			= -1,\ \chi\!\left(b^{\frac{q+1}{4}} + 1\right)=1.
		\end{equation}
		\item $B_{0100}(b)=1$ if and only if 
		\begin{equation}\label{BU_eqn r=2 0100}
			\chi(b)
			= \chi\!\left(b^{\frac{q+1}{4}} - 1\right)
			=\chi\!\left(b^{\frac{q+1}{4}} + 1\right) = 1, \ \chi\!\left(\left(b^{\frac{q+1}{4}} + 1\right)^{\frac{q+1}{4}} + 1\right)
			= -1.
		\end{equation}
		\item $B_{1000}(b)=1$ if and only if 
		\begin{equation}\label{BU_eqn r=2 1000}
			\chi(b)
			=\chi\!\left(b^{\frac{q+1}{4}} + 1\right) = 1, \
			\chi\!\left(b^{\frac{q+1}{4}} - 1\right) 
			=\chi\!\left(\left(-b^{\frac{q+1}{4}} + 1\right)^{\frac{q+1}{4}} - 1\right)
			= -1.
		\end{equation}
	\end{itemize} 
\end{lemma}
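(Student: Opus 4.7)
The plan is to handle each of the four cases individually. By Lemma~\ref{BU_lemma2} we already know $B_{0001}(b),B_{0010}(b),B_{0100}(b),B_{1000}(b)\le 1$, so it suffices to characterize when each equals~$1$, i.e., when the boomerang system \eqref{BU_system} admits a (necessarily unique) solution in the prescribed $S_{ij}\times S_{kl}$. The approach is to reduce the system using the character values prescribed by that set, and then solve it by successive square-root extractions in $\Fq$. Two arithmetic facts will be used throughout: since $p=3$ and $n$ is odd, $q\equiv 3\pmod 8$, so $(q+1)/4$ is odd and in particular $(-b)^{(q+1)/4}=-b^{(q+1)/4}$; and a direct exponent reduction modulo $q-1$ gives $\chi(b^{(q+1)/4})=b^{(q+1)(q-1)/8}=b^{(q-1)/2}=\chi(b)$. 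Finally, $1/2=-1$ in characteristic~$3$, so every occurrence of $b/2$ becomes $-b$.

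I would carry out $B_{0001}$ in full as the prototype. For $(x,y)\in S_{00}\times S_{01}$ the factor $1+\chi(y+1)$ vanishes, so the second equation of \eqref{BU_system} reduces to $(x+1)^2=-b$; this forces $\chi(b)=-1$. Writing $c=b^{(q+1)/4}$, the solutions are $x+1\in\{c,-c\}$, and since $\chi(c)=\chi(b)=-1$ the constraint $\chi(x+1)=1$ selects $x+1=-c$. The condition $\chi(x)=1$ then becomes $\chi(c+1)=-1$. Substituting the resulting $x^2=-b-c+1$ into the first equation yields $y^2=1-c$, so $\chi(c-1)=-1$ is required, and the unique square root with $\chi(y)=1$ is $y=(1-c)^{(q+1)/4}$. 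The remaining constraint $\chi(y+1)=-1$ is precisely the last condition of \eqref{BU_eqn r=2 0001}. Conversely, when all four conditions of \eqref{BU_eqn r=2 0001} hold, these formulas reconstruct an actual pair $(x,y)\in S_{00}\times S_{01}$ solving \eqref{BU_system}, giving the ``if and only if''.

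The remaining three cases follow the same template but with different pairs of summands vanishing. In each, the equation of \eqref{BU_system} with a single surviving term is a pure quadratic of the form $(\cdot)^2=\pm b$, fixing $\chi(b)$ and determining one of $x,\,x+1,\,y,\,y+1$ up to a sign that is pinned down by its prescribed character; the other equation then produces a second square-root step $(\cdot)^2=1\pm c$, yielding the condition on $\chi(b^{(q+1)/4}\pm 1)$; and the final membership constraint becomes the nested character condition. The main obstacle is sign bookkeeping: at every square-root step one must invoke $\chi(b^{(q+1)/4})=\chi(b)$ to ensure the required character value uniquely forces the sign, so that the listed conditions are simultaneously necessary and sufficient, and one has to track carefully how the four prescriptions of $\chi(x),\chi(x+1),\chi(y),\chi(y+1)$ on each $S_{ij}\times S_{kl}$ twist the generic template into the asymmetric forms appearing in \eqref{BU_eqn r=2 0001}--\eqref{BU_eqn r=2 1000}.
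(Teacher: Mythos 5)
Your proposal is correct and follows essentially the same route as the paper: reduce \eqref{BU_system} on $S_{00}\times S_{01}$ to $(x+1)^2=-b$ and $x^2-y^2=-b$, extract the unique square roots compatible with the prescribed character values via the $\frac{q+1}{4}$-th power (using that $\frac{q+1}{4}$ is odd and $\chi(b^{\frac{q+1}{4}})=\chi(b)$), read off the four character conditions, and observe that the construction reverses for the converse; the paper likewise treats the $B_{0001}$ case in full and dismisses the other three by symmetry. Your explicit sign bookkeeping (e.g.\ $x+1=-c$ because $\chi(c)=\chi(b)=-1$) matches the paper's computation exactly.
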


\begin{proof}
	If there is a solution $(x,y)\in S_{00}\times S_{01}$ in \eqref{BU_system} with $r=2$, then \eqref{BU_system} reduces to 
	\begin{equation}\label{BU_system S0001 r=2}
		\begin{cases}
			x^2-y^2 = -b,\\
			(x+1)^2 = -b.
		\end{cases}
	\end{equation} 
	Raising both sides of the second equation in \eqref{BU_system S0001 r=2} to the $\frac{q+1}{4}$-th power yields $x + 1 = -b^{\frac{q+1}{4}}$ and $\chi(b) = -1$, 
	since $\frac{q+1}{4}$ is odd when $q = 3^n$ with $n$ odd and $\chi(x + 1) = 1$.
	Substituting $x = -b^{\frac{q+1}{4}} - 1$ into the first equation in \eqref{BU_system S0001 r=2}, we obtain
	$$
	y^2 = x^2 + b = \left(-b^{\frac{q+1}{4}} - 1\right)^2 + b = -b^{\frac{q+1}{4}} + 1,
	$$
	and hence $y = \left(-b^{\frac{q+1}{4}} + 1\right)^{\frac{q+1}{4}}$. Consequently, existence of solution $(x,y)\in S_{00}\times S_{01}$ satisfying \eqref{BU_system S0001 r=2} implies that $x$ and $y$ are uniquely determined. 
	Furthermore, it is straightforward to check
	$$
	\chi(b)
	= \chi\!\left(b^{\frac{q+1}{4}} + 1\right)
	= \chi\!\left(b^{\frac{q+1}{4}} - 1\right)
	= \chi\!\left(\left(-b^{\frac{q+1}{4}} + 1\right)^{\frac{q+1}{4}} + 1\right)
	= -1.
	$$
	By symmetry, proofs for the other cases are similar and we omit them. 
	
	Conversely, if conditions in \eqref{BU_eqn r=2 0001} are satisfied, then we can easily see that 
	$$
	(x,y) =\left( -b^{\frac{q+1}{4}}-1,\  \left(-b^{\frac{q+1}{4}} + 1\right)^{\frac{q+1}{4}}\right)
	$$ 
	is a solution of \eqref{BU_system S0001 r=2}, by the above process.
\end{proof}

Applying Lemma \ref{Fruproperty2_lemma}, the boomerang spectrum of $F$ is defined to be the multiset $BS_F=\{\nu_i : 0 \le i \le \beta_F\}$, where
\begin{equation*}
	\nu_i = \#\{ b\in \Fqmul : \beta_F(1,b)=i\}.
\end{equation*}
The following identity for the boomerang spectrum is well-known :
\begin{equation}\label{BS_identity}
	\sum_{i=0}^{\beta_F}\nu_i = q-1.
\end{equation}

\begin{theorem}\label{BS_theorem r=2}
	Let $p=3$ and $n\ge 3$ odd. Then $\beta_{F_2}= 1$. Furthermore, the boomerang spectrum of $F_2$ is given by
	\begin{equation*}
		BS_{F_{2}} = \left\{ 
		\nu_0 =  \frac{3q-5+2\Lambda}{4},\quad
		\nu_1 =  \frac{q+1-2\Lambda}{4}
		\right\},
	\end{equation*}
	where $\Lambda = \Sum_{x\in \Fq}\chi(x+1)\chi(x^2+1)$.
\end{theorem}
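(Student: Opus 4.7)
The plan is to first establish the upper bound $\beta_{F_2}(1,b)\le 1$ using Lemma~\ref{BS_lemma}, then count $\nu_1$ by direct parameterization of the system~\eqref{BU_system}, and finally evaluate the resulting character sum. From Subsection~\ref{subsec_BU_Fr} (combined with Lemma~18 of~\cite{MW25}, which rules out solutions with $x=0$ or $y=0$), one has $\beta_{F_2}(1,b)=B_{0001}(b)+B_{0010}(b)+B_{0100}(b)+B_{1000}(b)$ for $b\in\Fqmul$, with each summand at most~$1$. Inspecting Lemma~\ref{BS_lemma}, the four conditions force distinct values of the pair $\left(\chi(b),\chi\!\left(b^{(q+1)/4}+1\right)\right)$, namely $(-1,-1)$, $(-1,1)$, $(1,1)$, and $(1,-1)$ respectively, so at most one $B_{ijkl}(b)$ can equal~$1$. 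This immediately gives $\beta_{F_2}(1,b)\le 1$ and hence $\beta_{F_2}\le 1$.

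Next, I would reparametrize each case directly from~\eqref{BU_system} rather than through the $(q+1)/4$-th power form used in Lemma~\ref{BS_lemma}. For $(x,y)\in S_{00}\times S_{01}$, the system reduces (using $p=3$) to $b=-(x+1)^2$ and $y^2=x-1$; setting $x=y^2+1$ with $\chi(y)=1$, the membership constraints become $\chi(y)=\chi(y^2+1)=1$ and $\chi(y-1)=\chi(y+1)=-1$. Call this count $N_1$. The other three cases admit identical treatments, yielding $\Sum_b B_{0001}(b)=\Sum_b B_{0100}(b)=N_1$ and $\Sum_b B_{0010}(b)=\Sum_b B_{1000}(b)=N_2$, where
$$N_2=\#\{y\in\Fq:\chi(y)=1,\ \chi(y^2+1)=-1,\ \chi(y+1)=1,\ \chi(y-1)=-1\},$$
so $\nu_1=2(N_1+N_2)$.

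These two counts combine into $N_1+N_2=\#\{y:\chi(y)=1,\ \chi(y-1)=-1,\ \chi(y+1)\chi(y^2+1)=-1\}$, and since the boundary points $y\in\{0,\pm 1\}$ make the product vanish automatically (and $y^2+1\ne 0$ throughout $\Fq$),
$$8(N_1+N_2)=\Sum_{y\in\Fq}(1+\chi(y))(1-\chi(y-1))(1-\chi(y+1)\chi(y^2+1)).$$
Expanding into the eight monomials and applying Lemma~\ref{chi_lemma quad} to $\Sum_y\chi(y(y-1))=-1$ together with Lemma~\ref{chi_lemma cubic} to the odd polynomial $y^5-y$ reduces everything to evaluating $T=\Sum_y\chi(y^4-1)$ and $W=\Sum_y\chi(y(y+1)(y^2+1))$. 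The key observation is that the involution $y\mapsto 1/y$ on $\Fqmul$ sends $\chi(y^{-4}-1)$ to $-\chi(y^4-1)$, so the nonzero part of $T$ self-cancels and $T=\chi(-1)=-1$; the same substitution sends $\chi(y^{-1}(y^{-1}+1)(y^{-2}+1))$ to $\chi((y+1)(y^2+1))$, giving $W=\Lambda-1$. Substituting back,
$$8(N_1+N_2)=q+1-\Lambda-W+T=q+1-2\Lambda,$$
so $\nu_1=(q+1-2\Lambda)/4$ and $\nu_0=(q-1)-\nu_1=(3q-5+2\Lambda)/4$ by~\eqref{BS_identity}. Applying Lemma~\ref{chi_lemma inequal} to $(x+1)(x^2+1)$ yields $|\Lambda|\le 2\sqrt{q}$, which is strictly less than $(q+1)/2$ for $q\ge 27$; hence $\nu_1>0$ and $\beta_{F_2}=1$.

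The main obstacle is the character-sum step: recognizing that the inversion $y\mapsto 1/y$ simultaneously collapses the quartic sum $T$ and the cubic sum $W$ into closed-form expressions tied to $\Lambda$ is what makes the exact identity $8(N_1+N_2)=q+1-2\Lambda$ possible. Once that identity is established, everything else---the pairwise exclusivity argument for the upper bound, the four parameterizations, and the Weil-bound positivity check---is routine bookkeeping.
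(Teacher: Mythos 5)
Your proposal is correct and arrives at exactly the paper's formulas, but the counting step takes a genuinely different route. The paper computes $\nu_1$ by summing indicator products over $b$, using the conditions of Lemma~\ref{BS_lemma} expressed through $b^{\frac{q+1}{4}}$, and then needs two successive changes of variable ($y=b^{\frac{q+1}{4}}$, $t=(y+1)^{\frac{q+1}{4}}$) plus the rational substitution $z=\frac{t(t+1)}{t^2+1}$ to collapse everything to $\Lambda$. You instead parameterize each solution set $S_{ij}\times S_{kl}$ directly by the free field element (e.g.\ $x=y^2+1$ for $S_{00}\times S_{01}$), which turns $\Sum_b B_{ijkl}(b)$ into clean point counts $N_1,N_2$ and reduces the whole computation to the three sums $\Lambda$, $W=\Sum_y\chi(y(y+1)(y^2+1))$ and $T=\Sum_y\chi(y^4-1)$; your inversion trick $y\mapsto 1/y$ (using $\chi(-1)=-1$) correctly gives $T=-1$ and $W=\Lambda-1$, and I verified that $8(N_1+N_2)=q+1-\Lambda-W+T=q+1-2\Lambda$, matching the paper. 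This is arguably more transparent than the paper's derivation. One slip worth fixing: your mutual-exclusivity argument claims the four conditions of Lemma~\ref{BS_lemma} are separated by the pair $\left(\chi(b),\chi\bigl(b^{\frac{q+1}{4}}+1\bigr)\right)$, with $B_{1000}$ assigned $(1,-1)$; in fact $B_{1000}$ requires $\chi(b)=\chi\bigl(b^{\frac{q+1}{4}}+1\bigr)=1$, the same pair as $B_{0100}$, and the two are distinguished instead by the sign of $\chi\bigl(b^{\frac{q+1}{4}}-1\bigr)$ ($+1$ for $B_{0100}$, $-1$ for $B_{1000}$). The conclusion that at most one of the four counts is nonzero still holds, so this does not affect the result, but the stated justification is inaccurate as written.
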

\begin{proof}
	Since any two of \eqref{BU_eqn r=2 0001}, \eqref{BU_eqn r=2 0010}, \eqref{BU_eqn r=2 0100}, and \eqref{BU_eqn r=2 1000} in Lemma \ref{BS_lemma} cannot occur simultaneously, we obtain $\beta_{F_2}\le 1$.
	
	If $\chi\left(b^{\frac{q+1}{4}} - 1 \right) = -1$, then
	\begin{align*}
		&\chi\left( \left( -b^{\frac{q+1}{4}} + 1\right) ^{\frac{q+1}{4}}+1\right) \chi\left( \left( -b^{\frac{q+1}{4}} + 1\right) ^{\frac{q+1}{4}}-1\right)
		= \chi\left( \left( -b^{\frac{q+1}{4}} + 1\right) ^{\frac{q+1}{2}}-1\right)
		\\
		&=\chi\left( \chi\left( -b^{\frac{q+1}{4}} + 1\right)\cdot \left( -b^{\frac{q+1}{4}} + 1\right) -1\right) = \chi\left(- b^{\frac{q+1}{4}}  \right) = -\chi(b). 
	\end{align*}
	Hence, $B_{0001}(b)=1$ if and only if 
	\begin{align*}
		&\chi(b)=-1,\ 
		\chi\!\left(b^{\frac{q+1}{4}} + 1\right) = -1, \
		\chi\!\left(b^{\frac{q+1}{4}} - 1\right)=-1,\\  
		&-1=\chi\!\left(\left(-b^{\frac{q+1}{4}} + 1\right)^{\frac{q+1}{4}} + 1\right) = -\chi\left( b \right)\chi\!\left(\left(-b^{\frac{q+1}{4}} + 1\right)^{\frac{q+1}{4}} - 1\right) = \chi\!\left(\left(-b^{\frac{q+1}{4}} + 1\right)^{\frac{q+1}{4}} - 1\right).
	\end{align*}
	Thus, $B_{0001}(b)+B_{1000}(b)=1$ if and only if 
	\begin{equation}\label{BU_eqn r=2 0001 1000}
		\chi(b)\chi\left(b^{\frac{q+1}{4}} + 1 \right) =1,\ 
		\chi\left(b^{\frac{q+1}{4}} - 1 \right) 
		= \chi\left( \left( -b^{\frac{q+1}{4}} + 1\right) ^{\frac{q+1}{4}} - 1\right) =-1.
	\end{equation}
	Similarly, $B_{0100}(b)+B_{0010}(b)=1$ if and only if 
	\begin{equation}\label{BU_eqn r=2 0010 0100}
		\chi(b)\chi\left(b^{\frac{q+1}{4}} - 1 \right) = \chi\left(b^{\frac{q+1}{4}} + 1 \right) = 1,\ \chi\left( \left( b^{\frac{q+1}{4}} + 1\right) ^{\frac{q+1}{4}}-1\right) =-1.
	\end{equation}
	
	\eqref{BU_eqn r=2 0010 0100} coincides with \eqref{BU_eqn r=2 0001 1000} after substituting $b$ by $-b$. Hence, $\nu_1$ equals twice the number of $b$ satisfying \eqref{BU_eqn r=2 0010 0100}, and is therefore given by
	\begin{equation}\label{nu1_eqn}
		\nu_1 =\Frac{1}{4}\Sum_{b\in \Fq\setminus \F_3}\left( 1 + \chi(b)\chi\!\left(b^{\frac{q+1}{4}} - 1\right)\right) 
		\left( 1 + \chi\!\left(b^{\frac{q+1}{4}} + 1\right)\right) 
		\left( 1 - \chi\left( \left( b^{\frac{q+1}{4}} + 1\right) ^{\frac{q+1}{4}}-1\right)\right)  = \Frac{1}{4}(A-2),
	\end{equation}
	where
	\begin{equation*}
		A=\Sum_{b\in \Fq}\left( 1 + \chi(b)\chi\!\left(b^{\frac{q+1}{4}} - 1\right)\right) 
		\left( 1 + \chi\!\left(b^{\frac{q+1}{4}} + 1\right)\right) 
		\left( 1 - \chi\left( \left( b^{\frac{q+1}{4}} + 1\right) ^{\frac{q+1}{4}}-1\right)\right).
	\end{equation*}
	Let $y=b^{\frac{q+1}{4}}$. Then, since $\frac{q+1}{4}$ is odd, applying Lemma \ref{chi_lemma quad} and Lemma \ref{chi_lemma cubic},
	\begin{align*}
		A& = q +\Sum_{y\in \Fq}\chi(y(y-1)) + \Sum_{y\in \Fq} \chi(y+1) 
		- \Sum_{y\in \Fq} \chi\left( (y+1) ^{\frac{q+1}{4}}-1\right)
		+ \Sum_{y\in \Fq} \chi(y(y^2-1)) \\
		&-\Sum_{y\in \Fq}\left( \chi(y(y-1)) +\chi(y+1) +\chi(y(y^2-1))\right) \chi\left( (y+1) ^{\frac{q+1}{4}}-1\right)\\
		&=q - 1 -\Sum_{y\in \Fq}\left( \chi(y(y-1)) +\chi(y+1) +\chi(y(y^2-1))\right) \chi\left( (y+1) ^{\frac{q+1}{4}}-1\right)\\
		&=q - 2 -\Sum_{y\in \Fq\setminus\{-1\}}\left( \chi(y(y-1)) +\chi(y+1) +\chi(y(y^2-1))\right) \chi\left( (y+1) ^{\frac{q+1}{4}}-1\right).
	\end{align*}
	If $t=(y+1)^{\frac{q+1}{4}}$, then 
	\begin{equation*}
		y = 
		\begin{cases}
			t^2-1 &\text{ if }\chi(t)=1,\\
			-t^2-1 &\text{ if }\chi(t)=-1.
		\end{cases}
	\end{equation*}
	Hence,
	\begin{align*}
		A &= q -2- \Sum_{\chi(t)=1}\left( \chi((t^2-1)(t^2-2)) +\chi(t^2) +\chi(t^2(t^2-1)(t^2-2))\right) \chi\left( t-1\right)\\
		&- \Sum_{\chi(t)=-1}\left( \chi((-t^2-1)(-t^2-2)) +\chi(-t^2) +\chi(-t^2(-t^2-1)(-t^2-2))\right) \chi\left( t-1\right)\\
		&= q -2 -\Sum_{\chi(t)=1}\left( 2\chi((t^2-1)(t^2-2)) +1 \right) \chi\left( t-1\right) + \Sum_{\chi(t)=-1}\chi(t-1).
	\end{align*}
	Since 
	\begin{equation*}
		\Sum_{\chi(t)=1}\chi(t-1)-\Sum_{\chi(t)=-1}\chi(t-1)= \frac{1}{2}\left( \Sum_{t\in \Fqmul}  (1+\chi(t))\chi(t-1)-\Sum_{t\in \Fqmul}(1-\chi(t))\chi(t-1) \right) = \Sum_{t\in \Fq}  \chi(t(t-1)) =-1,
	\end{equation*}
	we obtain
	\begin{align*}
		A&= q  - 1 - 2\Sum_{\chi(t)=1} \chi((t^2-1)(t^2-2)(t-1)) 
		= q - 1 - 2\Sum_{\substack{\chi(t)=1\\ t\ne 1}} \chi((t+1)(t^2-2)(t-1)^2)\\
		&= q - 1 - 2\Sum_{\substack{\chi(t)=1\\ t\ne 1}} \chi((t+1)(t^2-2))
		= q + 1 - 2\Sum_{\chi(t)=1} \chi((t+1)(t^2-2)) \\
		&= q + 1 - \Sum_{t\in \Fqmul} (1+\chi(t))\chi((t+1)(t^2+1)) \\
		&= q + 1 - \Sum_{t\in \Fqmul} \chi((t+1)(t^2+1))- \Sum_{t\in \Fq} \chi(t(t+1)(t^2+1)).
	\end{align*}
		Let $z=\frac{t(t+1)}{t^2+1}$. Then, $z=1$ if and only if $t=1$. For $t\ne1$, then $(z-1)t^2-t+z=0$. The discriminant of this quadratic equation on $t$ can be computed as $(-1)^2-z(z-1) = 1+z-z^2$. Hence, applying $z=-s-1$, we obtain
	\begin{align*}
		\Sum_{t\in \Fq} \chi(t(t+1)(t^2+1)) &= 1 + \Sum_{t \in \Fq\setminus \{1\}}\chi(t(t+1)(t^2+1)) = 1 + \Sum_{t \in \Fq\setminus \{1\}}\chi\left( \Frac{t(t+1)}{t^2+1}\right) \\
		&= 1+\Sum_{z \in \Fq\setminus \{1\}} (1+\chi(1+z-z^2))\chi(z) = -1+\Sum_{z\in \Fq}\chi(z)\chi(1+z-z^2)\\
		&= -1+\Sum_{s\in \Fq}\chi(-s-1)\chi(-s^2-1) = -1+\Sum_{s\in \Fq}\chi(s+1)\chi(s^2+1)
	\end{align*}	
	Thus, we get
	\begin{align*}
		A&= q + 2 - \Sum_{t\in \Fq} \chi((t+1)(t^2+1)) - \left( -1 +\Sum_{t\in \Fq} \chi((t+1)(t^2+1))\right) \\
		&= q + 3 - 2\Sum_{t\in \Fq} \chi((t+1)(t^2+1)).
	\end{align*}
	Hence, \eqref{nu1_eqn} reduces to 
	\begin{equation*}
		\nu_1 =\Frac{1}{4}\left(  A-2  \right)  
		= \Frac{1}{4} \left( q +1  - 2\Sum_{x\in \Fq}\chi((x+1)(x^2+1))\right) = \Frac{1}{4}(q+1-2\Lambda)
	\end{equation*}
	as desired. Applying \eqref{BS_identity}, we have
	\begin{equation*}
		\nu_0 =q-1-\nu_1 =  \Frac{1}{4} \left( 3q -5  + 2\Lambda\right).
	\end{equation*}
	
	By Lemma \ref{chi_lemma inequal}, $|\Lambda| \le 2\sqrt{q}$. Hence, we have $\nu_1 =  \frac{1}{4}(q+1-2\Lambda) \ge \frac{1}{4}(q+1-4\sqrt{q})$. We confirm that $\frac{1}{4}(q+1-4\sqrt{q})>1$ when $q>22$, and therefore $\beta_{F_2}=1$ when $n\ge 3$ is odd. 
\end{proof}

The boomerang spectrum of $F_2$ for $3 \le n \le 9$, verified using SageMath, is shown in Table~\ref{table_BS_F2}.

\begin{table}[h!]
	\centering
	\setlength{\tabcolsep}{10pt}
	\renewcommand{\arraystretch}{1.25}
	\begin{tabular}{ccc}
		\hline
		$n$ & $\Lambda$ & $BS_{F_2}$ \\
		\hline
		3        & $-10$   & $\{ \nu_0 =14,\ \nu_1 =12\}$     \\
		5       & $2$     & $\{ \nu_0 =182,\ \nu_1 =60\}$      \\
		7      & $86$    & $\{ \nu_0 =1682,\ \nu_1 =504\}$       \\
		9     & $-190$  & $\{ \nu_0 =14666,\ \nu_1 =5016\}$     \\
		\hline
	\end{tabular}
	\caption{Boomerang spectrum $BS_{F_2}$ when $p=3$, $n$ is odd, $3\leq n \leq 9$.}\label{table_BS_F2}
\end{table}

By Lemma~\ref{CM linear equiv} and Theorem~\ref{BS_theorem r=2}, the following result follows immediately.

\begin{corollary}\label{BS_corollary r=2}
	Let $p=3$ and $n\ge 3$ be odd. 
	Then, $\beta_{F_{\frac{3^{n-1}+1}{2}}}=1$. 
	Furthermore, $F_{\frac{3^{n-1}+1}{2}}$ has the same boomerang spectrum as $F_2$, 
	which is given in Theorem~\ref{BS_theorem r=2}.
\end{corollary}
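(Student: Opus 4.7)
The plan is to chain together two results already established in the paper: the linear equivalence of Lemma \ref{CM linear equiv} and the explicit boomerang spectrum of $F_2$ from Theorem \ref{BS_theorem r=2}. Specifically, I would instantiate Lemma \ref{CM linear equiv} at $k=1$ (which requires only $p=3$ and $n$ odd, both of which are hypothesized). Since $\frac{3^1+1}{2} = 2$, this lemma immediately yields that $F_2 = F_{\frac{3+1}{2}}$ is linearly equivalent to $F_{\frac{3^{n-1}+1}{2}}$ via the linear permutation $L(x)=x^{3^{n-1}}$.

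Next I would invoke the invariance principle stated in the paragraph preceding Lemma \ref{CM linear equiv}: the differential and boomerang spectra are preserved under affine (in particular linear) equivalence. In particular, boomerang uniformity is preserved. Thus $\beta_{F_{\frac{3^{n-1}+1}{2}}} = \beta_{F_2}$, and the multiset $BS_{F_{\frac{3^{n-1}+1}{2}}}$ equals $BS_{F_2}$.

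Finally, applying Theorem \ref{BS_theorem r=2} (whose hypotheses $p=3$ and $n\ge 3$ odd are exactly those of the corollary) gives $\beta_{F_2}=1$ together with the explicit spectrum $\{\nu_0,\nu_1\}$ in terms of $\Lambda$. Combining this with the equivalence above concludes the corollary.

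There is essentially no obstacle here: the corollary is a direct transport of structure along the linear equivalence $L(x)=x^{3^{n-1}}$. The only point worth double-checking is that Lemma \ref{CM linear equiv} is indeed applicable at the boundary case $k=1$ (which is valid since the lemma is stated for general $k$ with $p=3$ and $n$ odd, and $1\le n-1$ when $n\ge 3$), and that the invariance statement covers the full boomerang spectrum, not merely the uniformity — but this is explicit in the paper.
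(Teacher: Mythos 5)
Your argument is exactly the paper's: instantiate Lemma~\ref{CM linear equiv} at $k=1$ to get the linear equivalence of $F_2$ and $F_{\frac{3^{n-1}+1}{2}}$, use the invariance of the boomerang spectrum under affine equivalence, and then apply Theorem~\ref{BS_theorem r=2}. The proposal is correct and matches the paper's (one-line) proof.
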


\section{Conclusion}\label{sec_con}

In this paper, we show that if $\gcd(q-1,r)\mid 2$ and $D_{00}(b)\le 1$ for all $b\in \Fqmul$, then $F_r$ is locally-APN with boomerang uniformity at most~$2$. 
Moreover, it holds that $D_{00}(b)\le 1$ for all $b\in\Fqmul$ for the exponents~$r$ listed in Table~\ref{table_r}.
We further investigate the differential spectra of $F_3$ and $F_{\frac{2q-1}{3}}$. 
In addition, by analyzing the boomerang spectrum of $F_2$ when $p=3$, we prove that its boomerang uniformity is equal to~$1$, 
which shows that the claimed conclusion in~\cite{MW25} does not hold in characteristic $3$, where it was claimed that the boomerang uniformity of $F_2$ equals~$2$ for sufficiently large~$q$. 

It should be noted that Table~\ref{table_r} does not exhaust all APN power functions 
with $\gcd(r,q-1)=2$. 
Preliminary experiments suggest that $D_{00}(b)\le1$ may also hold for some additional exponents, 
which we plan to investigate in future work.
We also plan to further explore how this idea can be extended in future work.

\bigskip
\noindent \textbf{Acknowledgements}:
The authors would like to thank the anonymous reviewers for their kind comments and valuable suggestions. This work was supported by the National Research Foundation of Korea (NRF) grant funded by the Korea government (MSIT) (No. RS-2021-NR061794). Soonhak Kwon was supported by Basic Science Research Program through the National Research Foundation of Korea (NRF) funded by the Ministry of Education (No. RS-2019-NR040081).

\end{document}